\documentclass[11pt]{article}
 
\addtolength{\jot}{1ex}                
 
 
\newtheorem{theorem}{Theorem}[section]
\newtheorem{lemma}[theorem]{Lemma}
\newtheorem{corollary}[theorem]{\indent Corollary}

\newcommand{\qed}{\nopagebreak \hfill $\Box$}
\newenvironment{proof}{\par \noindent {\em Proof:}}{\qed \par}

%
  {\begin{center}%
   \begin{minipage}[b]{0.8\textwidth}
   \begin{list}{}%
    {\setlength{\listparindent}{-2.5mm}%
     \setlength{\itemindent}{\listparindent}}%
     \setlength{\labelsep}{0in}%
     \setlength{\parskip}{0ex}%
     \setlength{\parsep}{\parskip}%
     \setlength{\topsep}{0in}%
     \setlength{\partopsep}{0in}%
   \item[]%
  }%
  {\end{list}%
   \end{minipage}%
   \end{center}%
  }
 
%
  {\begin{list}{}%
    {\setlength{\leftmargin}{5mm}%
     \setlength{\listparindent}{-0.5\leftmargin}%
     \setlength{\itemindent}{\listparindent}}%
     \setlength{\labelsep}{0in}%
     \setlength{\parskip}{0ex}%
     \setlength{\parsep}{\parskip}%
     \setlength{\topsep}{0in}%
     \setlength{\partopsep}{0in}%
    \item[]%
  }%
  {\end{list}%
  }
 

\setlength{\marginparwidth}{1in}
\setlength{\marginparpush}{-5ex}






\setlength{\evensidemargin}{-0.25in}
\setlength{\oddsidemargin}{-0.25in}
\setlength{\textwidth}{7.0in}
\setlength{\textheight}{9.25in}

\setlength{\topmargin}{0in}
\setlength{\headheight}{0in}
\setlength{\headsep}{0in}
\setlength{\itemsep}{-\parsep}

\newcommand{\ol}{\setlength{\itemsep}{0pt.}\begin{enumerate}}
\newcommand{\eol}{\end{enumerate}\setlength{\itemsep}{-\parsep}}
\setlength{\parskip}{0mm}

\usepackage{graphicx}
\usepackage{latexsym}
\usepackage{algorithmic}

\usepackage{psfrag}
\usepackage{color}
\usepackage[section]{placeins}

\title{LP-rounding algorithms for facility-location problems}

\author{
Jaroslaw Byrka\thanks{Institute of Computer Science, University of Wroclaw, Poland. {\tt jby@ii.uni.wroc.pl}.
Work of this author was partially conducted at CWI Amsterdam, TU Eindhoven, EPFL Lausanne, and while visiting
the University of Maryland. Partially supported by FNP HOMING PLUS/2010-1/3 grant and MNiSW grant number N N206 368839,
2010-2013.
}
\and
MohammadReza Ghodsi\thanks{Dept.\ of Computer Science, University of Maryland,
College Park, MD 20742, USA. \tt{ghodsi@cs.umd.edu}}
\and
 Aravind Srinivasan\thanks{Dept.\ of Computer Science and
Institute for Advanced Computer Studies, University of Maryland,
College Park, MD 20742, USA. {\tt srin@cs.umd.edu}.
Supported by NSF Award CNS-0626636.}}

\newcommand{\C}{\mathcal{C}}
\newcommand{\F}{\mathcal{F}}

\begin{document}
\maketitle

\begin{abstract}
We study LP-rounding approximation algorithms for 
metric uncapacitated facility-location problems.
We first give a new analysis for the algorithm of Chudak and Shmoys, which
 differs from the analysis of Byrka and Aardal in that
now we do not need any bound based on the solution to the dual LP program.
Besides obtaining the optimal bifactor approximation as do
Byrka and Aardal, we can now also show that the algorithm with scaling 
parameter equaling $1.58$ is, in fact, an $1.58$-approximation algorithm.
More importantly, we suggest an approach based on additional randomization
and analyses such as ours, which could achieve or approach the conjectured
optimal $1.46\cdots$--approximation for this basic problem. 

Next, using essentially the same techniques,
we obtain improved approximation algorithms in the $2$-stage stochastic 
variant of the problem, where we
must open a subset of facilities having only stochastic
information about the future demand from the clients.
For this problem
we obtain a $2.2975$-approximation algorithm in the standard setting,
and a $2.4957$-approximation in the more restricted, per-scenario setting.

We then study robust fault-tolerant facility location, 
introduced by Chechik and Peleg: solutions here are designed to provide low connection cost in 
case of failure of up to $k$ facilities.
Chechik and Peleg gave a
$6.5$-approximation algorithm for $k=1$ and a ($7.5k + 1.5$)-approximation algorithm
for general $k$.
We improve this to an LP-rounding $(k+5+4/k)$-approximation algorithm.
We also observe that in case of oblivious failures the expected approximation ratio
can be reduced to $k + 1.5$, and that the integrality gap of the natural
LP-relaxation of the problem is at least $k + 1$. 

\end{abstract}

\setcounter{page}{0}
\thispagestyle{empty}

\newpage

\section{Introduction} 
In facility location problems, we seek a subset of given locations
where to build facilities, in order to service a given set of clients.
The goal is to minimize the total cost of constructing facilities
and the clients' service cost, which in the metric setting is a function of distances
between clients and the facilities that they are assigned to.
In this paper we will only consider uncapacitated problems,
where there is no restriction on the number of clients connected
to a single facility. (We sometimes use the terminology of 
\emph{opening} a subset of the existing facilities, rather then 
\emph{constructing} them.) We present improved approximation
algorithms for a variety of such problems using LP-rounding; we also
sketch a possible approach to achieving or approaching the optimal
approximation for the basic and perhaps most-studied variant, the
Uncapacitated Facility Location problem (UFL). 

The UFL is defined as follows.
Given a set $\mathcal{F}$ of facilities and a set $\mathcal{C}$ 
of clients, we aim to open a subset of facilities and connect every
client to an open facility. The cost of opening facility $i$ is $f_i$, 
and the cost of connecting client $j$ to facility $i$ is $c_{ij}$;
the connection costs are assumed to define a symmetric metric.
The goal is to choose the facilities and connections
 that minimize the sum of facility-opening costs and client-connection costs. 

The UFL problem is NP-hard and hard to approximate better than the
positive solution $s_0 \sim 1.46$ to the equation $s = 1 + 2 e^{-s}$~\cite{DBLP:journals/jal/GuhaK99}, where $e$ denotes the base of the natural logarithm. 
Jain et al.~\cite{DBLP:journals/jacm/JainMMSV03} generalized this result to the following bifactor lower bound.
Let $F_{OPT}$ and $C_{OPT}$ denote the facility-opening cost and client-connection cost of an optimal solution. 
They proved that it is unlikely, for any $\lambda \geq 1$, that there exists 
a polynomial-time algorithm that finds a solution with facility-opening cost at most
$\lambda \cdot F_{OPT}$ { and} connection cost at most $(1 + 2e^{-\lambda})\cdot C_{OPT}$. 
On the positive side, Shmoys, Tardos and Aardal~\cite{DBLP:conf/stoc/ShmoysTA97} provided
a constant factor LP-rounding approximation algorithm that exploits
the assumption on the connection costs being metric.
A series of algorithms improving the approximation ratio then
followed, borrowing and contributing to essentially all known major
approaches in approximation algorithms.  
The currently best-known approximation ratio is reached by the 
$1.5$-approximation algorithm of
Byrka and Aardal~\cite{byrka:2212}: this is obtained by a combination of a greedy algorithm
analyzed by a dual fitting technique~\cite{DBLP:journals/siamcomp/MahdianYZ06} and a novel analysis of the 
LP-rounding algorithm of Chudak and Shmoys~\cite{DBLP:journals/siamcomp/ChudakS03}.

The first result of this paper is to simplify the analysis of Byrka and Aardal.
In our case, the expected connection cost of a single client 
gets bounded with respect to the fractional connection cost of the client,
rather than w.r.t.\ a combination of its fractional connection and a dual budget
as in~\cite{DBLP:journals/siamcomp/ChudakS03,byrka:2212}. The main result of~\cite{byrka:2212} 
remains unaffected: we still
obtain that the expected cost of the solution is at most $\gamma \cdot F_{OPT} + (1 + 2e^{-\gamma})\cdot C_{OPT}$ 
if the scaling parameter is $\gamma \geq 1.678$. However, our analysis is
purely primal-based, and thus works if we have an approximately-optimal
LP solution as well. (That is, our approximation bounds will be scaled by
$(1 + \epsilon)$ if we have an $(1 + \epsilon)$-approximate solution to the
LP -- obtained, for instance, by some fast algorithm.) 
Furthermore, for smaller values of the parameter $\gamma$, we obtain bounds that are stronger than in~\cite{byrka:2212}.
In particular, for $\gamma=1.575$ we obtain 1.575-approximate solutions which was not known
before. Interestingly, the same ratio was previously obtained by yet another analysis,
namely an analysis of Sviridenko~\cite{DBLP:conf/ipco/Sviridenko02}, who considered essentially the same algorithm, but
with the scaling parameter $\gamma$ drawn randomly from a certain nontrivial distribution. Perhaps most importantly, we suggest a new type of approach 
for the UFL based on our analysis, which appears promising in terms of
approaching the optimal approximation of $s_0 \sim 1.46\cdots$. 
 
Next we consider the setting of uncertain, stochastic demand modeled as a 2-stage stochastic
optimization problem. In the first stage, given stochastic information about the set of clients
that needs to be served we decide to open a subset of facilities. Next, in the second stage,
the actual set of clients is revealed to us and we can open additional facilities.
Finally we connect each client to a facility opened in any of the stages.
The essence of the problem is that facility-opening costs change over time,
i.e., it is cheaper to open a facility earlier. We make the standard assumption
that the stochastic demand is presented to us in the form of a polynomial number
of possible scenarios, each scenario to be realized with a certain probability.
The goal is to minimize the total expected cost. Certain algorithms deliver slightly stronger,
per-scenario bounds, i.e., the cost in each scenario is compared to the fractional cost in this scenario.
The 2-stage stochastic facility location was introduced by Swamy and Shmoys~\cite{DBLP:conf/focs/SwamyS05}.
The approximation ratio was then improved by Srinivasan~\cite{DBLP:conf/soda/Srinivasan07}, who obtained
a $2.369$-approximation in the general expectation setting and a $3.095$-approximation
in the per-scenario model. We use the techniques we develop for UFL
to improve these ratios to $2.2975$ 
and $2.4957$ respectively.

Finaly, we consider the Robust Fault-Tolerant Facility Location (RFTFL) problem
introduced recently by Chechik and Peleg~\cite{DBLP:conf/stacs/ChechikP10}, 
and apply some insights from stochastic facility location. 
In RFTFL, one has to choose a set of facilities that are in a sense robust: 
i.e., in case of
failure of up to $k$ of the opened facilities, where $k$ is viewed as
a constant, the cost of connecting clients
to the facilities that did not fail should be small. More precisely,
we bound the total facility-opening cost plus a worst case client-connection cost.
We start by observing that this problem can be modeled by an IP 
similar to the one used for the 2-stage stochastic problem. Now we say that facilities
are opened only in the first stage and there are $|\mathcal{F}| \choose k$ scenarios,
each of them excluding the use of a certain subset of $k$ facilities.
We present an LP-rounding $(k+5+4/k)$-approximation algorithm,
which improves (for $k > 1$) upon the bound of $7.5k + 1.5$ from~\cite{DBLP:conf/stacs/ChechikP10}.
We also show that if the scenario is chosen by an oblivious 
adversary, the bound can be improved to $k + 1.5$.
Finally, we show a natural limit of this LP-rounding method
by constructing simple instances for which the integrality gap of such an
LP-relaxation of the problem is at least $k+1$.

\section{Uncapacitated Facility Location problem}


We start with a sketch of the algorithm, and then discuss the crucial 
steps in more detail. Following this, we sketch our ideas for approaching
the optimal $s_0 \sim 1.46\cdots$-approximation. 

By $CS(\gamma)$, we denote the algorithm of 
Chudak and Shmoys~\cite{DBLP:journals/siamcomp/ChudakS03}
with the scaling parameter equaling $\gamma$. A sketch of the 
$CS(\gamma)$ algorithm is as follows: 

\begin{enumerate}
\item
    Solve the standard LP-relaxation (see below) of UFL. 
\item \label{point:mod}
    Modify the fractional solution by: 
\begin{itemize}
 \item scaling up the facility-opening variables by $\gamma$,
 \item modifying the connection variables to completely use the ``closest''
	fractionally open facilities,
 \item splitting facilities, if necessary, such that there is no slack
       between the amount that a client is assigned to a facility, and the
       amount by which this facility is opened.
\end{itemize}
\item \label{point:cluster}  
    Divide clients into clusters based on the current fractional solution.
    In each cluster a specific client is assigned to be a ``cluster center''.
    (\emph{This is a key step.})
    
\item \label{point:open1}
    For every cluster, open one of the ``nearby'' facilities of the cluster center.
\item
    For each facility not considered above,
    open it independently with probability equal to its (scaled) 
fractional opening value. 
\item
    Connect each client to an open facility that is closest to it.
\end{enumerate}

\paragraph{IP formulation and relaxation.}

UFL has a natural formulation as the following integer program.
\begin{eqnarray}
\nonumber
\mbox{min  }&\sum_{i \in \mathcal{F}, j \in \mathcal{C}}{c_{ij}x_{ij}} +
\sum_{i \in \mathcal{F}}{f_i y_i} & ~\mbox{s.t.} \\ \nonumber
&\sum_{i \in \mathcal{F}} x_{ij} = 1 & \mbox{for all }j \in \mathcal{C}, \\ \nonumber
&x_{ij} - y_i \leq 0 & \mbox{for all } i \in \mathcal{F}, j \in \mathcal{C},\\ \label{eq_int}
&x_{ij},y_i \in \{ 0,1 \}       & \mbox{for all } i \in \mathcal{F}, j \in \mathcal{C}\,.
\end{eqnarray}

A linear relaxation of this IP formulation is obtained by
replacing the integrality constraints (\ref{eq_int}) by the
constraint
$x_{ij} \geq 0$, $y_i \geq 0$  for all $i \in \mathcal{F}, j \in \mathcal{C}$.
We use this LP relaxation as a
lower bound for the cost of the optimal integral solution.

\paragraph{Scaling and clustering.}
Inspired by a filtering technique of Lin and Vitter,
the following scaling procedure has been successfully applied
to facility location problems.
Suppose that we have solved the LP
relaxation, and that the optimal~\footnote{Our algorithm is entirely primal and therefore
may start with an arbitrary feasible fractional solution. we only start with an optimal fractional solution
to be sure that its cost is no more then the cost of the optimal integral one.} primal solution is $(x^*,y^*)$.
We will start by modifying 
$(x^*,y^*)$ by scaling the $y$-variables by a constant $\gamma >
1$ to obtain a fractional solution $(x^*,\tilde{y} )$, where $
\tilde{y} = \gamma \cdot y^*$.
Note that by scaling we might set some $\tilde{y}_i > 1$.
In the filtering of Shmoys et al.~such a variable
would instantly be rounded to 1. However, for the compactness
of a later part of our analysis it is useful not to round
these variables, but rather to split facilities. 

Before we discuss
splitting, let us first modify the connection variables.
Suppose that the values of the
$y$-variables are scaled and fixed, but that we now have the freedom to
change the values of the $x$-variables in order to minimize the
connection cost.
For each client $j$ we compute the values of the corresponding
$\tilde{x}$-variables in the following way. We choose an ordering
of facilities with non-decreasing distances to client $j$. We
connect client $j$ to the first facilities in the ordering so that
among the facilities fractionally serving $j$, only the last one
can be opened by more than that it serves
$j$ (i.e., for any facilities $i$ and $i'$ such that $i'$ is
later in the ordering, if $\tilde{x}_{ij} < \tilde{y}_i$ then
$\tilde{x}_{i'j}=0$).
In the next step, we eliminate the occurrences of
situations where $0 < \tilde{x}_{ij} < \tilde{y}_i$. We do so by
creating an equivalent instance of the UFL problem, where facility
$i$ is split into two identical facilities $i'$ and $i''$. In the
new setting, the opening of facility $i'$ is $\tilde{x}_{ij}$ and
the opening of facility $i''$ is $\tilde{y}_i - \tilde{x}_{ij}$.
The values of the $\tilde{x}$-variables are updated accordingly.
By repeatedly applying this procedure we obtain a so-called
\emph{complete} solution $(\overline{x}, \overline{y})$, i.e., a
solution in which no pair $i \in \mathcal{F}, j \in \mathcal{C}$
exists such that $0 < \overline{x}_{ij} < \overline{y}_i$
(see~\cite{DBLP:conf/ipco/Sviridenko02}[Lemma 1] for a more detailed argument).


Based on the complete fractional solution $(\overline{x}, \overline{y})$, some of the 
facilities in our instance of UFL are grouped into clusters.
(It is sometimes intuitive to view clusters as sets of clients instead.)
Each cluster of facilities is created by picking a client as {\em cluster center} and
creating a cluster from the facilities serving it in 
$(\overline{x}, \overline{y})$. 
We require that no facility belong to more than one cluster. Therefore when a cluster center is picked, 
any client that shares a facility with the cluster center can no longer be picked as the center of a new cluster.
The algorithm of \cite{DBLP:journals/siamcomp/ChudakS03} uses the following procedure to obtain the
clustering: while not all the clients are processed, greedily choose
(in the manner described next) a new cluster center $j$, and build 
a cluster from $j$
and facilities serving $j$ in $(\overline{x}, \overline{y})$. 
Remove $j$ and any client that shares a facility with $j$.
The greedy choice of the next cluster center depends on the distences between the clients 
and facilities serving them. For each client $j$, compute: 
(i) $d_j^{(c)}$, the average distance from $j$ to facilities serving it 
in $(\overline{x}, \overline{y})$, and (ii) $d_j^{(max)}$,  
the maximum distance to any facility serving it in 
$(\overline{x}, \overline{y})$.
(For more formal definitions see the analysis of the algorithm.)
The next cluster center is the remaining client with smallest $d_j^{(c)} + d_j^{(max)}$.

To obtain the integral solution, we round the fractional variables as follows.
For each cluster the complete fractional opening variables $\overline{y}_i$ sum to 1. We 
open exactly one facility within each cluster, with probabilities equal to the $\overline{y}_i$s.
Any facility that is not in a cluster is opened independently 
with probability $\overline{y}_i$.
Each client is connected to the closest open facility.

\subsection{Analysis}


We will use the average distances between single clients and groups of facilities
defined as follows.
%
  For any client $j \in \mathcal{C}$, and for any subset of facilities $\mathcal{F'} \subset \mathcal{F}$
  such that $\sum_{i \in \mathcal{F'}}{\overline{y}_i} > 0$, let
$
    d(j,\mathcal{F'}) = \frac{\sum_{i \in \mathcal{F'}}{c_{ij} \cdot \overline{y}_i}}
                           {\sum_{i \in \mathcal{F'}}{\overline{y}_i}}.
$ 
%
%
We will call the set of facilities $i \in \F$ such that $\overline{x}_{ij} > 0$ the set of \emph{close
facilities} of client $j$ and we denote it by $C_j$. By analogy, we will call the set of facilities $i \in \F$
such that $x^*_{ij} > 0$ and $\overline{x}_{ij} = 0$ the set of \emph{distant facilities} of client $j$ and denote it $D_j$.
Observe that $C_j \cap D_j = \emptyset$ for each client $j$. 

We are interested in average distances from a client $j$ to sets of facilities fractionally serving it.
Let $d_j$ be the average connection cost in $x^*_{ij}$ defined as
\[
d_j = d(j, \F) = \frac{\sum_{i \in \F} c_{ij} \cdot x^*_{ij} }{\sum_{i \in \F} x^*_{ij}}  = \sum_{i \in \F} c_{ij} \cdot x^*_{ij}.
\]
Let $d^{(c)}_j$, $d^{(d)}_j$  be the average distances to close and distant facilities defined as
\[
d^{(c)}_j = d(j,C_j) = \frac{\sum_{i \in \F} c_{ij} \cdot \overline{x}_{ij} }{\sum_{i \in \F} \overline{x}_{ij}}
= \frac{\sum_{i \in C_j} c_{ij} \cdot \overline{y}_{i} }{\sum_{i \in C_j} \overline{y}_{i}}  = \sum_{i \in C_j} c_{ij} \cdot \overline{y}_{i},
\]
\[
d^{(d)}_j = d(j,D_j) = \frac{\sum_{i \in D_j} c_{ij} \cdot x^*_{ij} }{\sum_{i \in D_j} x^*_{ij}}
= \frac{\sum_{i \in D_j} c_{ij} \cdot \overline{y}_{i} }{\sum_{i \in D_j} \overline{y}_{i}}
 = \frac{\sum_{i \in D_j} c_{ij} \cdot \overline{y}_{i} }{\gamma - 1}.
\]
Let $d^{(max)}_j = \max_{i \in C_j} c_{ij}$ be the maximum distance of
$j$ to its close facilities, as mentioned above. \\
Let $\rho_j$ be defined as
$
 \rho_j = \frac{d_j - d^{(c)}_j}{d_j}
$
if $d_j > 0$, and define $\rho_j = 0$ otherwise.
Observe that $\rho_j$ takes value between $0$ and $1$.
$\rho_j = 0$ implies $d^{(c)}_j = d_j = d^{(d)}_j$, and
$\rho_j = 1$ occurs only when $d^{(c)}_j = 0$.  Large values of the paprameter$\rho_j$
are desirable for one part of the analysis, and small values are good for 
another part. 

\begin{lemma} \label{average_lemma}
 $d^{(d)}_j = d_j (1+ \frac{\rho_j}{\gamma -1})$.
\end{lemma}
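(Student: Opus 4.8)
The plan is to reduce the statement to the single identity
$\sum_{i \in C_j \cup D_j} c_{ij}\,\overline{y}_i = \gamma\, d_j$, after which the lemma is a one-line computation. First I would dispose of the degenerate case $d_j = 0$: then $\sum_{i\in\mathcal{F}} c_{ij} x^*_{ij}=0$ with all terms nonnegative, so $c_{ij}=0$ for every $i$ with $x^*_{ij}>0$, hence $d^{(c)}_j=d^{(d)}_j=0$ and $\rho_j=0$ by definition, and the claimed equality reads $0=0$. So assume $d_j>0$; recall also that $\gamma>1$, so $\gamma-1>0$ and the division below is legitimate.

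Next I would record the "volume" bookkeeping coming from the construction of $(\overline{x},\overline{y})$. Because the (split) optimal solution routes every client to its nearest fractionally-open facilities and is complete, we have the disjoint union $C_j \cup D_j = \{\, i \in \mathcal{F} : x^*_{ij} > 0 \,\}$, and on this set $\overline{y}_i = \gamma\, y^*_i = \gamma\, x^*_{ij}$ (scaling, then completeness of $(x^*,y^*)$ on the support of $j$). Summing, $\sum_{i \in C_j}\overline{y}_i = \sum_{i\in C_j}\overline{x}_{ij} = 1$ (the close facilities exactly serve the unit demand of $j$ in $\overline{x}$) and $\sum_{i \in D_j}\overline{y}_i = \gamma - 1$ (the leftover scaled volume), which is exactly what makes the displayed formulas for $d^{(c)}_j$ and $d^{(d)}_j$ valid. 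Multiplying the defining sum of $d_j$ by $\gamma$ and using $\overline{y}_i = \gamma x^*_{ij}$ on the support yields the key identity $\sum_{i \in C_j \cup D_j} c_{ij}\,\overline{y}_i = \gamma \sum_{i\in\mathcal{F}} c_{ij} x^*_{ij} = \gamma\, d_j$.

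Finally I would combine these. By definition $\rho_j d_j = d_j - d^{(c)}_j$, i.e. $d^{(c)}_j = d_j(1 - \rho_j)$. Splitting the key identity over the disjoint sets $C_j$, $D_j$ and using $d^{(c)}_j = \sum_{i\in C_j}c_{ij}\overline{y}_i$ together with $(\gamma-1)\,d^{(d)}_j = \sum_{i\in D_j}c_{ij}\overline{y}_i$ gives $(\gamma - 1)\,d^{(d)}_j = \gamma\, d_j - d^{(c)}_j = \gamma\, d_j - d_j(1-\rho_j) = d_j(\gamma - 1 + \rho_j)$; dividing by $\gamma - 1$ gives $d^{(d)}_j = d_j\bigl(1 + \tfrac{\rho_j}{\gamma - 1}\bigr)$, as claimed.

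The algebra in the last paragraph is routine; the step that needs care is the bookkeeping in the middle paragraph — in particular the identities $C_j \cup D_j = \{\, i : x^*_{ij} > 0 \,\}$ and $\sum_{i \in C_j}\overline{y}_i = 1$, $\sum_{i \in D_j}\overline{y}_i = \gamma-1$ — which rely on the facts that an LP-optimal $x^*$ already connects each client to its closest fractionally-open facilities and that the (split) solution is complete, so that scaling the $y$'s by $\gamma$ only shrinks the set of facilities each client actually uses. Once those are in place, nothing further is needed.
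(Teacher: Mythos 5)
Your argument is correct and is exactly the direct computation the paper intends (the lemma is stated without proof, with Figure~\ref{fig:distances} and the definitions of $d_j$, $d^{(c)}_j$, $d^{(d)}_j$ doing the work): since $C_j$ and $D_j$ partition the support of $x^*_{\cdot j}$ with $\overline{y}$-masses $1$ and $\gamma-1$ and $\overline{y}_i=\gamma x^*_{ij}$ there, one gets $\gamma d_j = d^{(c)}_j + (\gamma-1)d^{(d)}_j$ and the claimed identity follows by the algebra you give. Your bookkeeping assumptions (greedy/complete $x^*$, splitting not affecting distances) are the same conventions the paper adopts, so nothing is missing.
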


\begin{figure}[t]
\begin{center}
\fboxsep7pt
\framebox[0.95\columnwidth]{
\begin{minipage}{0.90\columnwidth}
\begin{center}

\psfrag{g}{$\frac{1}{\gamma}$}
\psfrag{r}{$1$}
\psfrag{c1}{$d^{(d)}_j$} 
\psfrag{c2}{\hspace{-1mm}$d^{(max)}_j$} 
\psfrag{c3}{$d_j$} 
\psfrag{c4}{\hspace{-6mm}$d^{(c)}_j = d_j(\rho_j - 1)$} 

\includegraphics[width=10cm]{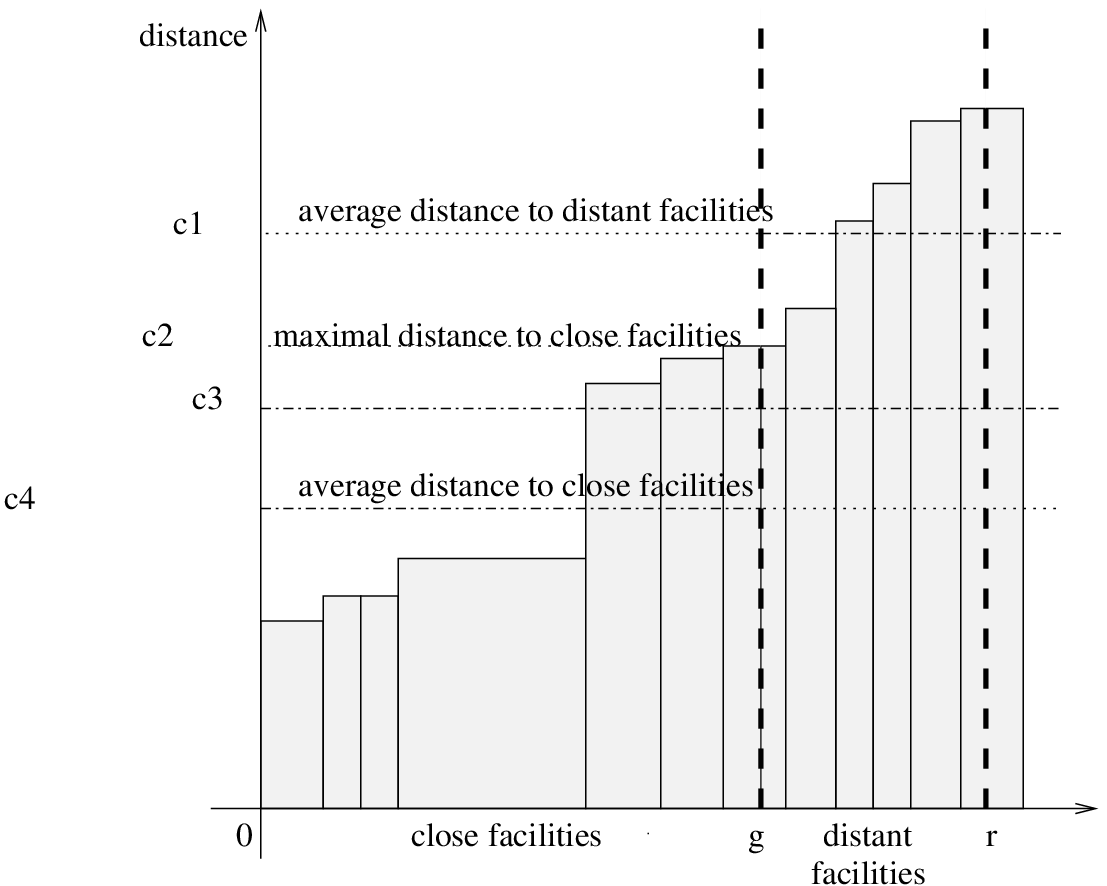}
\end{center}
\end{minipage}
}
\end{center}
\caption{Distances to facilities serving client $j$ in $(x^*,y^*)$. The width of a rectangle corresponding to facility $i$
is equal to $\gamma \cdot x^*_{ij} = \overline{y}_i$. This figure helps 
us understand the meaning of $\rho_j$.}
\label{fig:distances}
\end{figure}

Observe that $d^{(max)}_j \leq d^{(d)}_j$.
We will also use the following lemmas from~\cite{byrka:2212}: 

\begin{lemma} \label{lem:main-lemma}
Suppose $\gamma < 2$ and that clients $j,j' \in \mathcal{C}$ share a facility in $(\overline{x},\overline{y})$, 
i.e., $\exists i \in \mathcal{F} \mbox{ s.t. } \overline{x}_{ij} > 0 \mbox{ and }
\overline{x}_{ij'} > 0 $. Then, either $C_{j'} \setminus
(C_j \cup D_j ) = \emptyset $ or
\[
 d(j,C_{j'} \setminus (C_j \cup D_j )) \leq
 d_{j}^{(d)} + d_{j'}^{(max)} + d_{j'}^{(c)}.
\]
\end{lemma}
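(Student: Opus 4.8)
The plan is to estimate, for each facility $i' \in S := C_{j'} \setminus (C_j \cup D_j)$, the distance $c_{i'j}$ by detouring through the client $j'$ and through a facility shared by $j$ and $j'$. Since $j,j'$ share a facility there is some $i^\star$ with $\overline{x}_{i^\star j}>0$ and $\overline{x}_{i^\star j'}>0$, i.e. $i^\star \in C_j \cap C_{j'}$; hence $c_{i^\star j} \le d_j^{(max)} \le d_j^{(d)}$ and $c_{i^\star j'} \le d_{j'}^{(max)}$, which furnishes two of the three terms of the claimed bound. The triangle inequality on the metric over $\mathcal{F}\cup\mathcal{C}$, through $i^\star$ and $j'$, gives for every $i'\in S$
\[
c_{i'j} \;\le\; c_{i'j'} + c_{i^\star j'} + c_{i^\star j} \;\le\; c_{i'j'} + d_{j'}^{(max)} + d_j^{(d)} ,
\]
so what remains is to show that the $\overline{y}$-weighted average of $c_{i'j'}$ over $i'\in S$ contributes at most $d_{j'}^{(c)}$.

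For that step I would exploit that, by the definition of $D_j$, a facility in $S$ does not fractionally serve $j$ at all, so $S$ is disjoint from $C_{j'}\cap(C_j\cup D_j)$; combined with $S \subseteq C_{j'}$ and $\sum_{i\in C_{j'}}\overline{y}_i = 1$ (completeness), this lets me regard $\sum_{i'\in S}\overline{y}_{i'}c_{i'j'}$ as a portion of the budget $d_{j'}^{(c)} = \sum_{i\in C_{j'}}\overline{y}_i c_{ij'}$. The subtlety is that $d(j',S)$ by itself need not be at most $d_{j'}^{(c)}$ — it can be as large as $d_{j'}^{(max)}$, when $S$ happens to contain the facilities of $C_{j'}$ farthest from $j'$ — so one must not bound $c_{i'j'}$ facility by facility; instead I would keep it as a sum and match the $\overline{y}$-mass of $S$ against an equal mass of the facilities of $C_{j'}$ that \emph{do} serve $j$ (total mass $1-\sum_{i\in S}\overline{y}_i$), routing each piece of $S$ through its matched piece and into $j$. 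Choosing the matched piece to lie first in $C_j$ and then, if necessary, in the nearest part of $D_j\cap C_{j'}$ keeps the corresponding distances to $j$ within an average of $d_j^{(d)}$; the hypothesis $\gamma<2$ enters here to keep the mass $\gamma-1$ of $D_j$ commensurate with the unit mass of $C_{j'}$, so that the matched portion of $D_j$ stays below the $D_j$-average.

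The main obstacle is precisely this weight bookkeeping, and in particular the sub-case where the $\overline{y}$-mass of $S$ exceeds that of $C_{j'}\cap(C_j\cup D_j)$, so that $S$ cannot be matched entirely into the $j$-serving facilities of $C_{j'}$. For the residual mass of $S$ I would fall back on the cruder detour through the single facility $i^\star$ — which only yields a bound of shape $d_{j'}^{(max)}+d_{j'}^{(max)}+d_j^{(max)}$ on that part — and absorb the extra slack: if $j'$ is the cluster center of $j$ this is immediate from the greedy selection rule, which gives $d_{j'}^{(c)}+d_{j'}^{(max)} \le d_j^{(c)}+d_j^{(max)}$; otherwise one observes that a heavy $S$ forces $C_{j'}$ to carry below-average facilities inside $S$ itself, which already pulls $d(j',S)$ down to $d_{j'}^{(c)}$. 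Everything else is routine triangle-inequality manipulation and rearrangement of weighted averages, together with the already-noted facts $d_j^{(max)}\le d_j^{(d)}$ and $d_{j'}^{(c)}\le d_{j'}^{(max)}$.
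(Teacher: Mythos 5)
Your opening reduction is sound, and you have correctly located the crux: after routing through a shared facility $i^\star\in C_j\cap C_{j'}$ one gets $d(j,S)\le d(j',S)+d_{j'}^{(max)}+d_j^{(d)}$ for $S=C_{j'}\setminus(C_j\cup D_j)$, and the whole difficulty is that $d(j',S)$ may exceed $d_{j'}^{(c)}$. (Note, incidentally, that this paper does not prove the lemma at all: it imports it from Byrka and Aardal, so there is no in-paper proof to match.) However, the mechanism you propose for the hard case does not close the gap. In the route $i'\to j'\to i''\to j$ you must, on weighted average, charge $c_{i'j'}$ to $d_{j'}^{(max)}$, $c_{i''j'}$ to $d_{j'}^{(c)}$ and $c_{i''j}$ to $d_j^{(d)}$, so the matched facilities $i''\in B:=C_{j'}\cap(C_j\cup D_j)$ must be \emph{simultaneously} close to $j'$ and close to $j$. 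Choosing them by proximity to $j$ (``first in $C_j$, then the nearest part of $D_j\cap C_{j'}$'') gives no control of $c_{i''j'}$ beyond $d_{j'}^{(max)}$ -- e.g.\ the only member of $C_j\cap C_{j'}$ may lie at distance $d_{j'}^{(max)}$ from $j'$ -- which overspends the budget; choosing them by proximity to $j'$ gives no control of $c_{i''j}$. Moreover the assertion that the matched portion of $D_j$ ``stays below the $D_j$-average'' is unjustified: $D_j\cap C_{j'}$ may consist entirely of the far tail of $D_j$, so even its nearest part can have average distance to $j$ strictly above $d_j^{(d)}$, and the mass inequality $\gamma-1<1$ does not exclude this. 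The fallback arguments are also flawed: the lemma does not assume $j'$ is the cluster center of $j$ (it only assumes a shared facility), so the greedy selection rule is unavailable; even granting it, $d_{j'}^{(c)}+d_{j'}^{(max)}\le d_j^{(c)}+d_j^{(max)}$ does not imply the inequality you would need, $d_j^{(max)}+2d_{j'}^{(max)}\le d_j^{(d)}+d_{j'}^{(max)}+d_{j'}^{(c)}$ (take $d_j^{(c)}=d_j^{(max)}=d_j^{(d)}$ and $d_{j'}^{(c)}=0<d_{j'}^{(max)}$); and a heavy $S$ does not pull $d(j',S)$ down to $d_{j'}^{(c)}$ (mass $0.9$ of $C_{j'}$ at distance $1$ and mass $0.1$ at distance $0$ gives $d(j',S)=1>0.9=d_{j'}^{(c)}$).

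The missing idea is where $\gamma<2$ genuinely enters. When $d(j',S)>d_{j'}^{(c)}$, the offending facilities are those of $C_{j'}\cap D_j$ that are close to $j'$ but far from $j$; the key observation is that these facilities belong to $D_j$, whose total $\overline{y}$-mass is only $\gamma-1<1$, so their large distances to $j$ inflate $d_j^{(d)}$ itself with leverage $1/(\gamma-1)$ (and, via the reverse triangle inequality through $c_{jj'}$, their distances cannot be arbitrarily large once $j$ and $j'$ share a facility). A correct argument trades the excess $d(j',S)-d_{j'}^{(c)}$, measured inside the unit mass of $C_{j'}$, against the excess $d_j^{(d)}-d_j^{(max)}$ created in $D_j$; this accounting is exactly what breaks for $\gamma\ge 2$, and it is essentially how the cited Byrka--Aardal proof proceeds, by a case distinction on whether $d(j',S)\le d_{j'}^{(c)}$. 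Your sketch never uses the fact that far-from-$j$ members of $C_{j'}\cap D_j$ force $d_j^{(d)}$ upward, and without an ingredient of this kind the bound cannot be recovered from forward triangle inequalities and mass bookkeeping alone.
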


\begin{lemma} \label{exp_distance_lemma}
 Given are a random vector $y \in \{0,1\}^{|\mathcal{F}|}$ produced by CS Algorithm
 $CS(\gamma)$, a subset $A\subseteq \mathcal{F}$ of facilities such that $\sum_{i\in A}\bar{y}_i>0$,
 and a client $j\in\mathcal{C}$. Then, the following holds:
\[
 E\left[\min_{i\in A, y_i=1}c_{ij} \ | \ \sum_{i\in A} y_i \geq 1\right] \leq d(j,A)
\]
\end{lemma}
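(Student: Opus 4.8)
The plan is to reduce the claim to a single pointwise estimate via a layer-cake
identity, and then to derive that estimate from the product structure of the
rounding. Write $A=\{i_1,\ldots,i_m\}$ with facilities ordered so that
$c_{i_1j}\le c_{i_2j}\le\cdots\le c_{i_mj}$, put $\bar y(B)=\sum_{i\in B}\bar y_i$
for $B\subseteq A$, and for $t\ge0$ set $A_{\le t}=\{\,i\in A:\ c_{ij}\le t\,\}$.
On the event $\sum_{i\in A}y_i\ge1$ one has $\min_{i\in A,\,y_i=1}c_{ij}>t$
exactly when no facility of $A_{\le t}$ is open, so, since
$E[X\mid\mathcal E]=\int_0^\infty\Pr[X>t\mid\mathcal E]\,dt$ for a nonnegative
$X$,
\[
 E\Big[\min_{i\in A,\,y_i=1}c_{ij}\ \Big|\ \textstyle\sum_{i\in A}y_i\ge1\Big]
 =\int_0^\infty\Pr\big[\,\mbox{no facility of }A_{\le t}\mbox{ open}\ \big|\ A\mbox{ has an open facility}\,\big]\,dt ,
\]
while a parallel computation using $c_{ij}=\int_0^\infty\mathbf 1[\,t<c_{ij}\,]\,dt$
gives $d(j,A)=\int_0^\infty\big(1-\bar y(A_{\le t})/\bar y(A)\big)\,dt$. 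It thus
suffices to prove, for every $B\subseteq A$,
\[
 \Pr\big[\,\mbox{no facility of }B\mbox{ open}\ \big|\ A\mbox{ has an open facility}\,\big]\ \le\ 1-\frac{\bar y(B)}{\bar y(A)} .
\]

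To handle the left-hand side I will use how $CS(\gamma)$ opens facilities: the
facilities split into disjoint clusters --- inside a cluster $K$ exactly one
facility opens, chosen with probability $\bar y_i$, and $\bar y(K)=1$ --- together
with the unclustered facilities, each opened independently with probability
$\bar y_i$, all clusters and unclustered facilities being mutually independent.
Grouping $B$ by these independent parts $\kappa$ (a whole cluster, or a single
unclustered facility) and setting $Q_B=\Pr[\,\mbox{no facility of }B\mbox{ open}\,]$,
we get $Q_B=\prod_\kappa\big(1-\bar y(B\cap\kappa)\big)$, since inside a cluster
$K$ the open facility lies in $B\cap K$ with probability exactly $\bar y(B\cap K)$.
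As $\{\mbox{no facility of }B\mbox{ open}\}\cap\{A\mbox{ has an open facility}\}
=\{\mbox{no facility of }B\mbox{ open}\}\setminus\{\mbox{no facility of }A\mbox{ open}\}$
and $B\subseteq A$, the conditional probability above equals $(Q_B-Q_A)/(1-Q_A)$,
and a short rearrangement shows the required bound is equivalent to
\[
 \bar y(A)\cdot\Pr[\,B\mbox{ has an open facility}\,]\ \ge\ \bar y(B)\cdot\Pr[\,A\mbox{ has an open facility}\,] .
\]

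The heart of the matter is thus the monotonicity statement: for $B\subseteq A$
the ratio $\Pr[\,B\mbox{ has an open facility}\,]/\bar y(B)$ does not increase as
$B$ is enlarged. Writing $\Pr[\,B\mbox{ has an open facility}\,]=1-\prod_\kappa(1-\mu_\kappa)$
with $\mu_\kappa=\bar y(B\cap\kappa)\in[0,1]$ and $\bar y(B)=\sum_\kappa\mu_\kappa$,
adding one facility to $B$ either raises a single $\mu_\kappa$ (which stays in
$[0,1]$ because $\bar y(K)=1$) or creates a new part with mass in $[0,1]$; in both
cases the monotonicity of $\big(1-\prod_\kappa(1-\mu_\kappa)\big)\big/\sum_\kappa\mu_\kappa$
reduces, by a one-line computation, to the elementary inequality
$\big(1+\sum_l\nu_l\big)\prod_l(1-\nu_l)\le1$, which holds since
$\prod_l(1-\nu_l)\le e^{-\sum_l\nu_l}\le\big(1+\sum_l\nu_l\big)^{-1}$. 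Applying
this monotonicity between $B$ and $A$ gives the displayed inequality for every
$B\subseteq A$, hence the pointwise estimate, and hence the lemma.

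I expect the main obstacle to be the conditioning on $\{\sum_{i\in A}y_i\ge1\}$:
without it the statement is not even of the right shape, and it is exactly this
conditioning that forces one to carry $Q_A$ along and to route the argument
through the monotonicity of the map $B'\mapsto\Pr[\,B'\mbox{ has an open facility}\,]/\bar y(B')$
rather than merely through the fact that each facility $i$ opens with probability
(at least) $\bar y_i$. A subtler point is that the negative dependence inside a
cluster be used in the right direction --- conditioning on cluster mates being
closed only raises a facility's opening probability --- and that every mass
$\bar y_i$ and $\bar y(B\cap K)$ stays in $[0,1]$, which holds after the splitting
step (and, for facilities that serve no client, because an optimal fractional
solution leaves them closed).
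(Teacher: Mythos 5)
Your proof is correct, and it is worth noting that the paper itself does not prove this lemma at all: it is quoted verbatim from Byrka and Aardal, where the standard argument partitions $A$ into its cluster-groups and independently opened singletons, sorts these groups by their average distance to $j$, and bounds the conditional expectation directly by an inductive/exchange argument over the groups. Your route is genuinely different in organization: the layer-cake identity reduces the statement to the single pointwise bound $\Pr[\mbox{no facility of } B \mbox{ open} \mid A \mbox{ has an open facility}] \leq 1 - \bar{y}(B)/\bar{y}(A)$ for $B \subseteq A$, and the product form $Q_B = \prod_{\kappa}(1-\bar{y}(B\cap\kappa))$ over independent parts turns that bound into the monotonicity of $B \mapsto \Pr[B \mbox{ has an open facility}]/\bar{y}(B)$, which you verify by the clean inequality $\bigl(1+\sum_l \nu_l\bigr)\prod_l(1-\nu_l) \leq 1$. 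I checked the algebra: the cross-multiplication is equivalent to $\bar{y}(A)\Pr[B \mbox{ open}] \geq \bar{y}(B)\Pr[A \mbox{ open}]$, and the derivative of $\bigl(1-(1-\mu)P\bigr)/(\mu+S)$ in $\mu$ has numerator $P(1+S)-1 \leq 0$, so each step (raising a coordinate or adjoining one) only lowers the ratio; the degenerate case $\bar{y}(B)=0$ is trivial. What your approach buys is that it avoids any ordering of the groups by distance and any induction on their number, replacing it by one elementary real inequality; what the cited proof buys is a somewhat more probabilistic picture (the closest open group dominates the conditional law), which some readers find more intuitive. Your closing caveats are the right ones: the argument needs every part-mass in $[0,1]$, i.e.\ $\bar{y}_i \leq 1$ for unclustered facilities, which holds after splitting and, in all applications in the paper, because $A$ consists of facilities with $\bar{y}_i = \overline{x}_{ij'} \leq 1$ for some client $j'$; also the remark about negative correlation inside clusters is not actually needed once you have the exact product formula, so you could drop it.
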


In the analysis of our algorithm we will also use the following result:
\begin{lemma} \label{probability_lemma}
Suppose we are given $n$ independent events that occur with probabilities $p_1, p_2, \ldots, p_n$ respectively.
The probability that at least one of these events occurs is lower-bounded by
$1 - e^{-\sum_{i=1}^n p_i}$.
\end{lemma}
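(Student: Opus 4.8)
The plan is to pass to the complementary event. Because the $n$ events are independent, the probability that \emph{none} of them occurs factors as $\prod_{i=1}^n (1-p_i)$. Hence it suffices to prove the single inequality $\prod_{i=1}^n (1-p_i) \le e^{-\sum_{i=1}^n p_i}$; subtracting both sides from $1$ then yields exactly the claimed lower bound $1 - e^{-\sum_{i=1}^n p_i}$ on the probability that at least one event occurs.

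To establish that product bound, I would invoke the elementary inequality $1 - x \le e^{-x}$, which holds for all real $x$ (it follows, e.g., from convexity of $t \mapsto e^{-t}$, or from noting that $g(x) = e^{-x} - 1 + x$ satisfies $g(0)=0$ and $g'(x) = 1 - e^{-x}$, so $g$ has a global minimum at $x=0$). Applying it with $x = p_i$ for each $i$ gives $1 - p_i \le e^{-p_i}$. Since every $p_i$ is a probability, each factor $1 - p_i$ is nonnegative, so we may multiply these $n$ inequalities without reversing the direction, obtaining $\prod_{i=1}^n (1-p_i) \le \prod_{i=1}^n e^{-p_i} = e^{-\sum_{i=1}^n p_i}$.

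Putting the two steps together, $\Pr[\text{at least one event occurs}] = 1 - \prod_{i=1}^n (1-p_i) \ge 1 - e^{-\sum_{i=1}^n p_i}$, which is the statement. I do not anticipate any real obstacle; the only point requiring a moment's care is the nonnegativity of the factors $1-p_i$, needed so that multiplying the pointwise inequalities is legitimate, and this is guaranteed by $p_i \in [0,1]$.
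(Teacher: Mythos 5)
Your proof is correct and is exactly the standard argument: by independence the probability that no event occurs is $\prod_{i=1}^n (1-p_i)$, and the pointwise bound $1-p_i \le e^{-p_i}$ (valid since each factor is nonnegative) gives the claim. The paper states this lemma without proof, as a well-known fact, and your argument is precisely the one it implicitly relies on.
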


Let $\gamma_0$ be defined as the only positive solution to the following equation.
\begin{equation} \label{eq:gamma0}
\left( \frac{e^{-1} + e^{-\gamma}}{1 - \frac{1}{\gamma}} \right) -  \left(1 + 2e^{-\gamma} \right) = 0 
\end{equation}
An approximate value of this constant is $\gamma_0 \approx 1.67736$.
As we will observe in the proof of Theorem~\ref{thm:main-theorem},
equation~(\ref{eq:gamma0}) appears naturally in the analysis of algorithm $CS(\gamma)$.

\begin{theorem} \label{thm:main-theorem}
For $1 \leq \gamma < 2$, Algorithm $CS(\gamma)$ produces a solution with expected costs
\begin{eqnarray*}
E[cost(OPEN_i)] & = &\gamma \cdot F^*_i, \\
E[cost(CONN_j)] & \leq & \max\left\{1 + 2e^{-\gamma}, \frac{e^{-1}+e^{-\gamma}}{1 - \frac{1}{\gamma}}\right\} \cdot C^*_j\\  
\end{eqnarray*}
where $F^*_i = f_i y^*_i$, $C^*_j = \sum_{i \in \mathcal{F}}{c_{ij} x^*_{ij}}$, 
$F^* = \sum_{i \in \mathcal{F}}{F^*_i}$ and $C^* = \sum_{j \in \mathcal{C}}{C^*_j}$.
\end{theorem}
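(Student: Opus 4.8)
The facility-opening bound is immediate: the algorithm opens each facility $i$ with probability exactly $\overline{y}_i$ (whether $i$ is in a cluster, where the opening probabilities within a cluster sum to $1$ and equal the $\overline{y}_i$s, or outside a cluster, where it is opened independently with probability $\overline{y}_i$), and $\sum_i \overline{y}_i$ over the facilities split from a single original facility $i$ equals $\gamma y^*_i$; hence the expected opening cost is $\sum_i f_i \cdot \gamma y^*_i = \gamma F^*$. So the real work is the connection bound. The plan is to fix a client $j$ and give a single open facility to which $j$ can be connected whose expected distance is at most the claimed quantity times $C^*_j = d_j$; since the algorithm connects $j$ to its \emph{closest} open facility, this upper-bounds $E[cost(CONN_j)]$.

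I would split the analysis according to whether at least one close facility of $j$ is open. \emph{Case 1:} some facility in $C_j$ is open. Since the events ``$i$ open'' for $i \in C_j$ are (essentially) independent with probabilities $\overline{y}_i$ summing to $1$, Lemma~\ref{probability_lemma} gives that the probability that no close facility opens is at most $e^{-1}$, and conditioned on at least one close facility opening, Lemma~\ref{exp_distance_lemma} (with $A = C_j$) bounds the expected distance to the nearest open close facility by $d(j,C_j) = d^{(c)}_j$. \emph{Case 2:} no close facility of $j$ is open. Then I fall back on the cluster center. Let $j'$ be the cluster center whose cluster contains a facility shared with $j$ (such a $j'$ exists because clustering removes every client that shares a facility with a chosen center, so $j$ was removed by some center $j'$); note $j'$ might be $j$ itself, but in Case 2 we may assume $j \ne j'$. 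The cluster of $j'$ consists of facilities in $C_{j'}$ with total opening $1$, so exactly one of them is open; connect $j$ to it. To bound its expected distance, I would use Lemma~\ref{exp_distance_lemma} on the relevant subset, then invoke Lemma~\ref{lem:main-lemma} to bound $d(j, C_{j'}\setminus(C_j\cup D_j))$ by $d^{(d)}_j + d^{(max)}_{j'} + d^{(c)}_{j'}$; since $d^{(max)}_{j'} \le d^{(d)}_{j'}$, and since $j'$ was chosen greedily with minimum $d^{(c)} + d^{(max)}$ (so $d^{(c)}_{j'} + d^{(max)}_{j'} \le d^{(c)}_j + d^{(max)}_j \le d^{(c)}_j + d^{(d)}_j$), together with $d^{(d)}_j \le d^{(max)}_j \cdot(\text{something})$... more carefully, the distance from $j$ to the chosen facility in $j'$'s cluster is at most $d^{(max)}_j + (\text{diameter terms of } j')$, and I would bound it via the triangle inequality through the shared facility, yielding something like $d^{(c)}_j + d^{(max)}_j + d^{(c)}_{j'} + d^{(max)}_{j'} \le 2(d^{(c)}_j + d^{(max)}_j)$, but the sharper route is through Lemma~\ref{lem:main-lemma}.

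Combining the two cases, and using Lemma~\ref{average_lemma} to rewrite $d^{(d)}_j = d_j(1 + \rho_j/(\gamma-1))$ and $d^{(c)}_j = d_j(1-\rho_j)$, the expected connection cost becomes a function of $d_j$, $\rho_j$, and $\gamma$ only. The probability of being in Case 2 is at most $e^{-1}$ if we are careful — but actually the scaling by $\gamma$ means the close facilities of $j$ have total fractional opening $\gamma \sum_i x^*_{ij}$-worth only on the close set $C_j$ where it equals $1$; additionally the facilities that $j$ fractionally uses in $x^*$ but not in $\overline{x}$ form $D_j$ and get opened independently with probability totalling $\gamma - 1$, so the chance that \emph{all} of $C_j \cup D_j$ stays closed is at most $e^{-1}\cdot e^{-(\gamma-1)} = e^{-\gamma}$; this is where the $e^{-\gamma}$ term enters. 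Assembling: with probability $\le e^{-1}$ we resort to the cluster center (distance governed by the $d^{(d)}, d^{(max)}, d^{(c)}$ bound), with the remaining probability we use a close or distant facility. The final expression, after maximizing over $\rho_j \in [0,1]$, produces the two competing bounds $1 + 2e^{-\gamma}$ (the worst case at $\rho_j = 1$, where $d^{(c)}_j = 0$ and the cluster fallback costs $d^{(d)}_j + d^{(d)}_{j'} + 0$, each $\le d_j$-ish) and $\frac{e^{-1}+e^{-\gamma}}{1 - 1/\gamma}$ (the worst case at $\rho_j = 0$, where $d^{(c)}_j = d^{(d)}_j = d_j$ and the weighting of the Case-2 probability $e^{-1}$ against the $d^{(d)}$ term, divided by $1 - 1/\gamma$ coming from the $D_j$ normalization, dominates); equation~(\ref{eq:gamma0}) is exactly where these two cross, explaining $\gamma_0 \approx 1.67736$.

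\textbf{Main obstacle.} The delicate point is the Case-2 analysis: correctly identifying which facilities of the cluster of $j'$ are guaranteed to be ``new'' (i.e., not in $C_j \cup D_j$, so that Lemma~\ref{lem:main-lemma} applies) versus possibly already accounted for, and handling the conditioning correctly — the event ``no close facility of $j$ open'' is not independent of which facility opens in $j'$'s cluster if the clusters overlap in the facilities $j$ uses. The clean way around this, which I would follow, is to only ever connect $j$ to the opened facility in $j'$'s cluster when that facility lies in $C_{j'}\setminus(C_j\cup D_j)$, and to treat the (independent) randomness on $C_j\cup D_j$ separately from the (cluster-internal) randomness on the disjoint remainder — making the two pieces of randomness independent by construction. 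The second subtlety is purely arithmetic: verifying that after substituting Lemma~\ref{average_lemma} the maximum over $\rho_j$ is attained at an endpoint, so that the bound is the max of the two stated quantities; this is a routine convexity/monotonicity check in $\rho_j$ for fixed $\gamma$.
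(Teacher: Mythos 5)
Your plan is essentially the paper's own proof: the same three-tier fallback for a non-center client (close facilities, then distant facilities, then the facility opened in the cluster of the center $j'$), the same ingredients (Lemma~\ref{probability_lemma} for the probabilities, Lemma~\ref{exp_distance_lemma} for conditional expected distances, Lemma~\ref{lem:main-lemma} together with the greedy choice of the center for the fallback distance, Lemma~\ref{average_lemma} to express everything through $\rho_j$), and the same endpoint maximization of a function linear in $\rho_j$; your explicit treatment of the conditioning between the in-cluster randomness and the independent facilities is in fact more careful than the paper's. Two local slips in your assembly should be fixed, though neither changes the method. First, you charge the cluster-center fallback with probability $e^{-1}$; that event (no facility of $C_j\cup D_j$ open) has probability $p_s\le e^{-\gamma}$, as you yourself note earlier, while $e^{-1}$ only bounds the probability that no \emph{close} facility opens, i.e.\ $p_d+p_s\le e^{-1}$. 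If you literally weight the fallback distance by $e^{-1}$, at $\rho_j=0$ you obtain roughly $(1+2e^{-1})\,d_j\approx 1.74\,d_j$, which is weaker than the claimed $1+2e^{-\gamma}$; the correct bookkeeping keeps three cases, $p_c d^{(c)}_j+p_d d^{(d)}_j+p_s\bigl(d^{(d)}_j+d^{(max)}_{j'}+d^{(c)}_{j'}\bigr)\le (p_c+p_s)d^{(c)}_j+(p_d+2p_s)d^{(d)}_j$ via $d^{(max)}_{j'}+d^{(c)}_{j'}\le d^{(max)}_j+d^{(c)}_j\le d^{(d)}_j+d^{(c)}_j$, and only then uses $p_c+p_d+p_s=1$, $p_d+p_s\le e^{-1}$, $p_s\le e^{-\gamma}$. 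Second, your endpoint attributions are swapped: after substituting Lemma~\ref{average_lemma} the bound is $d_j\bigl[(1-\rho_j)(1+2e^{-\gamma})+\rho_j\frac{e^{-1}+e^{-\gamma}}{1-1/\gamma}\bigr]$, so $\rho_j=0$ (all facilities equidistant) yields $1+2e^{-\gamma}$, while $\rho_j=1$, where $d^{(c)}_j=0$ and $d^{(d)}_j=d_j\,\gamma/(\gamma-1)$ (this is the true source of the $1-1/\gamma$ denominator), yields $\frac{e^{-1}+e^{-\gamma}}{1-1/\gamma}$; the crossing point of these two is exactly equation~(\ref{eq:gamma0}). With those corrections your outline coincides with the paper's argument.
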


\begin{proof}
The expected facility-opening cost is 
$E[cost(OPEN_i)] = f_i \overline{y}_i = \gamma  f_i y_i^* = 
\gamma \cdot F^*_i$. 

To bound the expected connection cost, we show that for each client $j$ there is an open facility
within a certain distance with a certain probability.
If $j$ is a cluster center, one of its close facilities is open and the expected distance
to this open facility is $d_j^{(c)}$.

\begin{figure}[t]
\begin{center}
\fboxsep7pt
\framebox[0.95\columnwidth]{
\begin{minipage}{0.90\columnwidth}
\begin{center}

\includegraphics{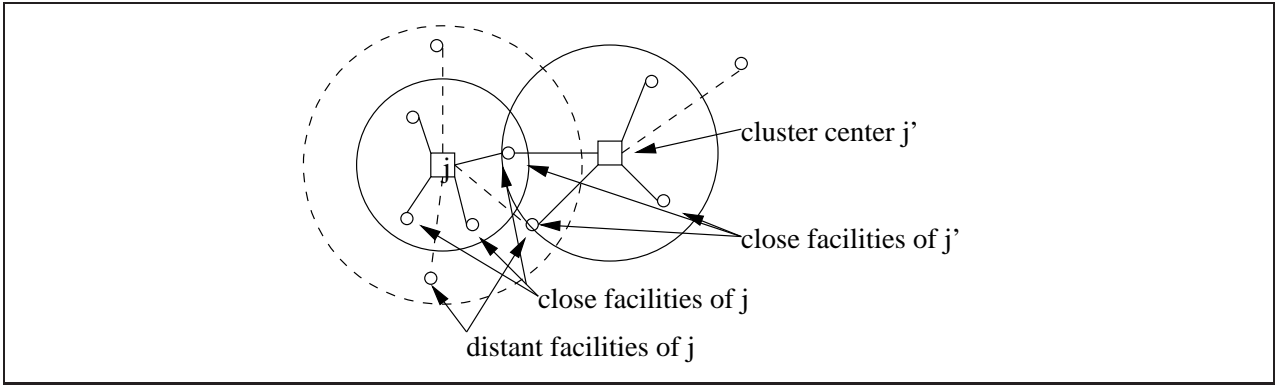}
\end{center}
\end{minipage}
}
\end{center}
\caption{Facilities that $j$ considers: close and distant, as well as close
facilities of its cluster center $j'$.}
\label{fig:routing}
\end{figure}

If $j$ is not a cluster center, it first considers its close facilities (see Figure~\ref{fig:routing}).
If any of them is open, by Lemma~\ref{exp_distance_lemma} the expected distance to the closest open facility is at most $d_j^{(c)}$.
From Lemma~\ref{probability_lemma}, at least one close facility is open with probability $p_c \geq (1-\frac{1}{e})$.
Suppose none of the close facilities of $j$ is open, but at least one of its distant facilities is open.
Let $p_d$ denote the probability of this event. Again by Lemma~\ref{exp_distance_lemma}, the expected distance to the closest facility is then at most $d_j^{(d)}$.
If neither any close nor any distant facility of client $j$ is open,
then $j$ connects itself to the facility serving its cluster center $j'$.
Again from Lemma~\ref{probability_lemma}, such an event happens with probability $p_s \leq \frac{1}{e^\gamma}$.
We will now use the fact that if $\gamma < 2$ then, by Lemma~\ref{lem:main-lemma} and Lemma~\ref{exp_distance_lemma},
the expected distance from $j$ to the facility opened around $j'$ is at most
$d_{j}^{(d)} + d_{j'}^{(max)} + d_{j'}^{(c)}$.

Finally, we combine the probabilities of particular cases
with the bounds on the expected connection for each of the cases, 
to obtain the following upper bound on the expected connection cost.
\begin{eqnarray*}
  E[cost(CONN_j)] & \leq &p_c \cdot d_j^{(c)} + p_d \cdot d_j^{(d)} + p_s \cdot (d_{j}^{(d)} + d_{j'}^{(max)} + d_{j'}^{(c)})\\
  & \leq & \left( (p_c + p_s) \cdot d_j^{(c)} + (p_d + 2p_s) \cdot d_j^{(d)} \right)\\
  & = & \left( (p_c + p_s) \cdot (1- \rho_j) d_j + (p_d + 2p_s) \cdot (d_j (1+ \frac{\rho_j}{\gamma -1})) \right)\\
  & = &((p_c + p_d + p_s) + 2p_s) \cdot d_j + d_j \cdot \left( -\rho_j(p_c + p_s) + (p_d + 2p_s) \cdot \frac{\rho_j}{\gamma -1} \right) \\
  & = &(1 + 2p_s) \cdot d_j +  d_j \cdot \rho_j \left( \frac{1-\gamma p_c + (2 - \gamma)p_s}{\gamma-1} \right)\\
  & \leq &(1 + 2e^{-\gamma}) \cdot d_j +  d_j \cdot \rho_j \left( \frac{1-\gamma (1-e^{-1}) + (2 - \gamma)e^{-\gamma}}{\gamma-1} \right)\\
  & = &(1 + 2e^{-\gamma}) \cdot d_j +  d_j \cdot \rho_j \left( \left( \frac{e^{-1}+e^{-\gamma}}{1 - \frac{1}{\gamma}}\right) - \left(1 + 2e^{-\gamma} \right) \right)\\
  & = &d_j \cdot \left[ (1 - \rho_j)\left(1 + 2e^{-\gamma} \right) + \rho_j \left( \frac{e^{-1}+e^{-\gamma}}{1 - \frac{1}{\gamma}} \right) \right] \\
  & \leq &d_j \cdot \max\left\{1 + 2e^{-\gamma}, \frac{e^{-1}+e^{-\gamma}}{1 - \frac{1}{\gamma}}\right\} \\
  & = &C^*_j \cdot \max\left\{1 + 2e^{-\gamma}, \frac{e^{-1}+e^{-\gamma}}{1 - \frac{1}{\gamma}}\right\}
\end{eqnarray*}


 The penultimate line above is due to the fact that $ 0 \leq \rho_j \leq 1$. 
 The total cost follows easily. Therefore, $CS(\gamma)$ is a 
 $\left(\gamma, \max\left\{1 + 2e^{-\gamma}, \frac{e^{-1}+e^{-\gamma}}{1 - \frac{1}{\gamma}}\right\} \right)$ 
 bi-factor approximation for UFL ($1 \leq \gamma \leq 2$.) 
 Note that $\left(\gamma, 1 + 2e^{-\gamma} \right)$ is the bi-factor approximation lower bound~\cite{DBLP:journals/jacm/JainMMSV03,DBLP:journals/jal/GuhaK99}.
%
%
\end{proof}

\begin{corollary}
The CS(1.575) algorithm is an $1.575$-approximation algorithm
for the UFL problem. 
Also, $CS(\gamma)$ is an optimal bi-factor approximation for UFL for $\gamma_0 < \gamma < 2$.
\end{corollary}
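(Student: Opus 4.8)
The plan is to read off both statements from Theorem~\ref{thm:main-theorem}, after understanding the single quantity
\[
\phi(\gamma) \;=\; \max\left\{\,1 + 2e^{-\gamma},\ \frac{e^{-1}+e^{-\gamma}}{1 - \frac{1}{\gamma}}\,\right\},
\]
which the theorem identifies as the connection-cost blow-up of $CS(\gamma)$. Everything else then combines with the fact, noted already, that the LP relaxation lower-bounds the optimal integral cost, so that $F^* + C^* \le OPT$, where $OPT$ denotes the cost of an optimal integral solution.

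\textbf{First claim.} Instantiate Theorem~\ref{thm:main-theorem} at $\gamma = 1.575$. Summing the per-facility and per-client bounds gives expected total cost at most $\gamma\, F^* + \phi(\gamma)\, C^*$, so it suffices to verify $\phi(1.575) \le 1.575$; then the expected cost is at most $1.575\,(F^*+C^*) \le 1.575 \cdot OPT$. Since $1.575 < \gamma_0 \approx 1.67736$, the maximum is attained by its second branch (the next paragraph explains why the two branches cross exactly at $\gamma_0$), so the real content is the numerical inequality $\frac{e^{-1}+e^{-1.575}}{1-1/1.575} \le 1.575$, the bound $1+2e^{-1.575} < 1.575$ being trivial. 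I would prove this with explicit interval estimates: since $1/1.575 = 40/63$, the denominator equals $23/63$, and using $e^{-1} < 0.36788$ together with $e^{-1.575} < 0.20701$ the left-hand side is at most $0.57489 \cdot \frac{63}{23} < 1.5748 < 1.575$. The slack here is only about $3\cdot 10^{-4}$, so the numerics must be carried to four decimals.

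\textbf{Second claim.} By the bifactor hardness of Jain et al.~\cite{DBLP:journals/jacm/JainMMSV03} quoted in the introduction, no polynomial-time algorithm can simultaneously achieve facility cost $\lambda F_{OPT}$ and connection cost $(1+2e^{-\lambda})C_{OPT}$ for any $\lambda\ge 1$; hence a $(\gamma, 1+2e^{-\gamma})$ algorithm is bifactor-optimal. So it suffices to show $\phi(\gamma) = 1 + 2e^{-\gamma}$ on $(\gamma_0,2)$, i.e.\ that
\[
g(\gamma) \;:=\; \frac{e^{-1}+e^{-\gamma}}{1 - \frac{1}{\gamma}} - \bigl(1 + 2e^{-\gamma}\bigr) \;\le\; 0 \qquad \mbox{for } \gamma_0 < \gamma < 2.
\]
Observe $g$ is precisely the left-hand side of~(\ref{eq:gamma0}) and is continuous on $(1,\infty)$. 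As $\gamma \to 1^+$ the numerator tends to $2e^{-1}>0$ while $1-\frac{1}{\gamma} \to 0^+$, so $g(\gamma)\to +\infty$; and $g(2) = 2(e^{-1}+e^{-2}) - (1+2e^{-2}) = \frac{2}{e} - 1 < 0$ since $e>2$. Thus $g$ has a zero in $(1,2)$, which by hypothesis must be $\gamma_0$ (the \emph{only} positive zero); in particular $\gamma_0 < 2$, so $(\gamma_0,2)$ is nonempty. Since $g$ has no further zero in $(\gamma_0,\infty)$ and is continuous there with $g(2)<0$, it is negative throughout $(\gamma_0,\infty)$, hence on $(\gamma_0,2)$. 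Therefore $\phi(\gamma) = 1+2e^{-\gamma}$ on that range, so $CS(\gamma)$ is a $(\gamma, 1+2e^{-\gamma})$ bifactor approximation, matching the lower bound; this also justifies the remark above that the two branches of $\phi$ cross exactly at $\gamma_0$.

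\textbf{Main obstacle.} There is no deep difficulty here: the work is a sign analysis of $g$ on $(\gamma_0,2)$ plus one tight numerical check. The point not to gloss over is that $g$ does not return and cross zero a second time between $\gamma_0$ and $2$ — this is exactly what the stated uniqueness of the positive root of~(\ref{eq:gamma0}) gives, so I would invoke that rather than attempt to prove monotonicity of $g$ directly. The only other care needed is arithmetic precision for $\phi(1.575)$, which sits just barely below $1.575$.
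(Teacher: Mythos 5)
Your proposal is correct and follows essentially the route the paper intends: the corollary is read off from Theorem~\ref{thm:main-theorem} by checking numerically that $\max\{1+2e^{-\gamma},\,(e^{-1}+e^{-\gamma})/(1-1/\gamma)\}\le 1.575$ at $\gamma=1.575$ (together with the LP lower bound on OPT), and by noting that for $\gamma>\gamma_0$ the maximum is $1+2e^{-\gamma}$, which matches the Jain et al.\ bifactor hardness bound cited at the end of the theorem's proof. Your sign analysis of the difference of the two branches via the uniqueness of the root of~(\ref{eq:gamma0}) and the tight four-decimal estimate at $1.575$ are exactly the routine details the paper leaves implicit.
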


\subsection{Approaching an optimal approximation}

Our purely primal-based analysis of $CS(\gamma)$ suggests a way to 
approach the optimal $s_0$-approximation. Our analysis shows that
if the adversary knows the value of $\gamma$, then the
adversary's optimal strategy -- in selecting a ``bad'' instance
and corresponding LP solution $(x^*, y^*)$ -- is just to make one
of two choices, as follows. For each $j$,
\begin{itemize}
\item \emph{either} all facilities $i$ with $x_{ij}^* > 0$ are at the
same distance from $j$,
\item \emph{or} there exist $0 < a_j < b_j$ such that all facilities $i$ 
with $x_{ij}^* > 0$ are at distance either $a_j$ or $b_j$ from $j$, 
such that $\sum_{i:~c_{ij} = a_j} x_{ij}^*$ is infinitesimally smaller
than $1/\gamma$.
\end{itemize}
Given $\gamma$, the adversary will make the above choice in order to
maximize the expected objective-function value after running $CS(\gamma)$.
(This optimization, as well as the corresponding choices for $a_j$ and $b_j$, 
can be carried out explicitly in a straightforward manner.) 

However, what  if we select $\gamma$ \emph{randomly} from some appropriate
distribution? If the adversary selects the second choice above, then a
suitably ``large'' value of $\gamma$ will defeat the adversary's goal of
making $\sum_{i:~c_{ij} = a_j} x_{ij}^*$ (infinitesimally) smaller
than $1/\gamma$, leading to a significantly-improved approximation. On
the other hand, if the adversary selects the first choice above, then 
a choice of $\gamma$ close to $s_0$ will lead to an approximation close
to $s_0$. 

We are unable at the moment to carry out the above idea in an optimal manner:
choosing the optimal distribution for $\gamma$, for instance. However, we view
this as a potentially-fruitful approach since it starts with the knowledge 
of what exactly are the limits of deterministic strategies for choosing 
$\gamma$.

\section{Two-stage stochastic facility location}
We now consider two-stage stochastic facility location in the
``explicitly-given polynomially-many scenarios'' model presented in the
introduction. The LP-rounding problem is thus to find 
integral solutions ``close'' to the optimal solution of the following LP, 
where the scenarios are indexed by the symbol $A$ and their respective
probabilities aregiven by the values $p_A$: 

\[ \mbox{minimize} ~
\sum_{i \in \mathcal{F}} f_i^I y_i +
\sum_A p_A (\sum_i f_i^A y_{A,i} + \sum_{j \in A} \sum_i c_{ij} x_{A,ij})~
\mbox{subject to} \]
\begin{eqnarray*}
\sum_i x_{A,ij} & \geq & 1 ~~\forall A ~\forall j \in A; 
\label{eqn:facloc-assign} \\
x_{A,ij} & \leq & y_i + y_{A,i} ~~\forall i ~\forall A ~\forall j \in A; 
\label{eqn:facloc-xleqy} \\
x_{A,ij}, y_i, y_{A,i} & \geq & 0 ~~\forall i ~\forall A ~\forall j \in A.
\nonumber
\end{eqnarray*}

First, in Section~\ref{section:general_exp} we give an algorithm for the standard setting,
then in Section~\ref{section:per_scenario}  we consider the per-scenario version of the problem.

\subsection{General expected cost: analysis with a dual bound}
\label{section:general_exp}
Consider the following dual formulation of the 2-stage stochastic facility location problem: 
\begin{eqnarray*}
 \mbox{maximize} && \sum_{A} p_A (\sum_{j \in A} v_{j,A})
~\mbox{subject to:}\\
 && w_{ij,A} \geq v_{j,A} - c_{ij}\\
 && f_i^I \leq \sum_{A} p_A (\sum_{j \in A} w_{ij,A})\\
 && f_i^A \leq \sum_{j \in A} w_{ij,A}\\
 &&  w_{ij,A},v_{j,A} \geq 0.
\end{eqnarray*}

Let $(x^*,y^*)$ and $(v^*,w^*)$ be optimal solutions to the primal
and the dual programs, respectively. Note that by complementary 
slackness, we have
$c_{ij} \leq v_{j,A}$ if $x_{A,ij} > 0$. 

\paragraph{Algorithm.}

We now describe a randomized LP-rounding algorithm that transforms the fractional 
solution $(x^*,y^*)$ into an integral solution $(\hat{x},\hat{y})$ with bounded expected cost.
The expectation is over the random choices of the algorithm, but not over the
random choice of the scenario. Note that we need to decide the first stage entries of $\hat{y}$
not knowing $A$. W.l.o.g.\ we assume that no facility is fractionally opened in $(x^*,y^*)$ 
in both stages, i.e, for all $i$ we have $y^*_i = 0$ or for all $A$ $y^*_{A,i} = 0$.
To obtain this property it suffices to have two identical copies of each facility,
one for Stage I and one for Stage II.

We start by scaling the fractional solution $(x^*,y^*)$ by a factor of 2.
As a result, we obtain a fractional solution $(\overline{x},\overline{y})$
with $\overline{x}_{A,ij} = 2 \cdot x^*_{A,ij} $, $\overline{y}_i =  2 \cdot y^*_i $,
and $\overline{y}_{A,i} =  2 \cdot y^*_{A,i} $. Note that the scaled fractional solution
$(\overline{x},\overline{y})$ can have facilities with fractional opening of more then 1.
For simplicity of the analysis, we do not round these facility-opening values to 1, but rather split such facilities.
More precisely, we split each facility $i$ with fractional opening $\overline{y}_i > \overline{x}_{A,ij} > 0$ (or $\overline{y}_{A,i} > \overline{x}_{A,ij} > 0$) for some $(A, j)$ into $i'$ and $i''$, such that 
 $\overline{y}_{i'} = \overline{x}_{A,ij}$ and $\overline{y}_{i''} = \overline{y}_i - \overline{x}_{A,ij}$.
We also split facilities whose fractional opening exceeds one.
By splitting facilities we create another instance of the problem, then we solve this modified instance
and interpret the solution as a solution to the original problem in the natural way.
The technique of splitting facilities is precisely described in~\cite{DBLP:conf/ipco/Sviridenko02}.

Define $\overline{x}^{(I)}_{A,ij} = \min \{\overline{x}_{A,ij}, \overline{y}_i \}$, and 
$\overline{x}^{(II)}_{A,ij} = \overline{x}_{A,ij} - \overline{x}^{(I)}_{A,ij}$.
Observe, that for a client-scenario pair $(j,A)$ either $\sum_{i \in \F} \overline{x}^{(I)}_{A,ij} \geq 1$, or
$\sum_{i \in \F} \overline{x}^{(II)}_{A,ij} > 1$. In the former case, we call such a pair \emph{first stage served},
and we denote the set of the first stage served pairs by $S$.

Since we can split facilities, for each $(j,A) \in S$ we can assume that
there exists a subset of facilities $F_{(j,A)} \subseteq \F$, 
such that $\sum_{i \in  F_{(j,A)} } \overline{x}^{(I)}_{A,ij} = 1$,
and for each $i \in F_{(j,A)}$ we have $\overline{x}^{(I)}_{A,ij} = \overline{y}_i$.
Also for each $(j,A) \notin S$ we can assume that
there exists a subset of facilities $F_{(j,A)} \subseteq \F$,
such that $\sum_{i \in  F_{(j,A)} } \overline{x}^{(II)}_{A,ij} = 1$,
and for each $i \in F_{(j,A)}$ we have $\overline{x}^{(II)}_{A,ij} = \overline{y}_{A,i}$.
Let $R_{(j,A)} = \max_{i \in F_{(j,A)}} c_{ij}$ be a maximal distance from $j$ to an $i \in F_{(j,A)}$.
Recall that, by complementary slackness, we have $R_{(j,A)} \leq v_{j,A}$. 

The algorithm opens facilities randomly in each of the stages
with the probability of opening facility $i$ equal to $\overline{y}_i$
in Stage I, and $\overline{y}_{A,i}$ in Stage II of scenario $A$.
Some facilities are grouped in disjoint \emph{clusters} in order to correlate 
the opening of facilities from a single cluster.
The clusters are formed in each stage by the following procedure.
Let all facilities be initially unclustered.
In Stage I, consider all client-scenario pairs $(j,A) \in S$
(in Stage II of scenario $A$, consider all clients $j$ such that $(j,A) \notin S$)
in the order of non-decreasing values $R_{(j,A)}$.
If the set of facilities $F_{(j,A)}$ contains no facility
from the previously formed clusters, 
then form a new cluster containing facilities from $F_{(j,A)}$,
otherwise do nothing.
In each stage, open exactly one facility in each cluster.
Recall that the total fractional opening of facilities in each cluster
equals 1. Within each cluster choose the facility randomly with the probability
of opening facility $i$ equal to the fractional opening $\overline{y}_i$ 
in Stage I, or $\overline{y}_{A,i}$ in Stage II of scenario $A$.
For each unclustered facility $i$ open it independently with 
probability $\overline{y}_i$ in Stage I, and with probability $\overline{y}_{A,i}$ in Stage II of scenario $A$.
Finally, at the end of Stage II of scenario $A$, connect each client $i \in A$ to the 
closest open facility.

%

\paragraph{Analysis.}
Consider the solution $(\hat{x},\hat{y})$ constructed by our LP-rounding algorithm.
We fix scenario $A$ and bound the expectation of 
$COST(A) = \sum_{i \in \F} (f_i^I \hat{y}_{i} + f_i^A \hat{y}_{A,i}) + \sum_{j\in A}\sum_{i\in \F} c_{ij}\hat{x}_{A,ij}$.
Define $C_A = \sum_{j \in A} C_{(j,A)} = \sum_{j \in A} \sum_{i \in \mathcal{F}}c_{ij}x^*_{A,ij}$,
$F_A = \sum_{i \in \F} (f_i^I y^*_{i} + f_i^A y^*_{A,i})$, $V_A = \sum_{j \in A} v^*_{j,A}$.

\begin{lemma} \label{primal_dual_lemma}
  $E[COST(A)] \leq e^{-2} \cdot 3 \cdot V_A + (1-e^{-2}) \cdot C_A + 2 \cdot F_A$ in each scenario $A$.
\end{lemma}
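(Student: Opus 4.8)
The plan is to bound, for a fixed scenario $A$, the expected connection cost of each client $j \in A$ separately, and then sum; the facility-opening term is immediate from the fact that each facility is opened (whether inside a cluster or independently) with probability exactly equal to its scaled fractional opening value, so $E[\sum_i (f_i^I \hat y_i + f_i^A \hat y_{A,i})] = \sum_i (f_i^I \overline y_i + f_i^A \overline y_{A,i}) = 2 F_A$. The interesting part is the connection cost, and here I would mimic the three-case argument from the proof of Theorem~\ref{thm:main-theorem}, but tailored to the two sets $\overline{x}^{(I)}$ and $\overline{x}^{(II)}$ and using the dual bound $R_{(j,A)} \le v^*_{j,A}$ instead of a purely primal estimate.

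First I would split into the two types of client-scenario pairs. Suppose $(j,A) \in S$, so $\sum_{i} \overline x^{(I)}_{A,ij} \ge 1$ and the subset $F_{(j,A)}$ of Stage-I facilities has total opening exactly $1$; these are opened with the constraint that exactly one in the cluster of $(j,A)$ (or of whichever earlier pair captured those facilities) is open. By Lemma~\ref{probability_lemma}, since the scaled fractional opening of the facilities fractionally serving $j$ in Stage~I sums to at least $1$ (in fact $\ge 1$ before restricting, and the relevant mass is at least... here one is careful: the mass of $\overline{x}^{(I)}_{A,ij}$ over all $i$ is $\ge 1$), the probability that \emph{no} Stage-I facility serving $j$ is open is at most $e^{-1}$; combined with the analogous statement for the Stage-II facilities of the pair if needed, one gets that with probability at least $1 - e^{-2}$ some facility that fractionally serves $j$ (in the appropriate stage) is open, and by Lemma~\ref{exp_distance_lemma} the expected distance to the closest such open facility is at most $d(j, \cdot) \le v^*_{j,A}$ by complementary slackness — more precisely $\le C_{(j,A)}$-type quantities when the facilities serving $j$ are close. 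In the remaining event, of probability at most $e^{-2}$, the client routes via the cluster center: it goes to the open facility of the cluster that captured $F_{(j,A)}$, whose center $j'$ was chosen with $R_{(j',A)} \le R_{(j,A)}$, and a triangle-inequality argument (as in Lemma~\ref{lem:main-lemma}) bounds the detour by $R_{(j,A)} + R_{(j',A)} + (\text{distance from } j' \text{ to its open facility}) \le 3 R_{(j,A)} \le 3 v^*_{j,A}$, using $R_{(j',A)} \le R_{(j,A)}$ twice and the complementary-slackness bound once. Summing the case contributions:
\[
E[\text{conn. cost of } j] \le (1-e^{-2}) \cdot C_{(j,A)} + e^{-2} \cdot 3 v^*_{j,A},
\]
and summing over $j \in A$ and adding the facility term $2F_A$ gives exactly the claimed bound. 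The case $(j,A)\notin S$ is symmetric, working with $\overline{x}^{(II)}$, Stage-II clusters, and the same failure probability $e^{-2}$ (indeed here the $\overline{x}^{(II)}$ mass exceeds $1$, which only helps).

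The main obstacle I anticipate is getting the bookkeeping of the "cheap" direct-connection term right so that it comes out as $(1-e^{-2}) C_A$ rather than something weaker. One must be careful that the mass of facilities fractionally serving $j$ that lie in a \emph{cluster} is what drives the $1-e^{-2}$ figure, that the expected distance conditioned on "some serving facility open" is genuinely at most the average fractional connection cost $C_{(j,A)}$ (Lemma~\ref{exp_distance_lemma} gives this, but one must check the set $A$ in that lemma is chosen so that $d(j,A) \le C_{(j,A)}$ — e.g. restricting to the cheapest half of the serving mass), and that the events across the two stages are independent so the $e^{-1}\cdot e^{-1} = e^{-2}$ multiplication is legitimate. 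Handling the interaction between "client served directly in Stage~I with prob.\ $\ge 1-e^{-1}$" and "served directly in Stage~II with prob.\ $\ge 1-e^{-1}$" to extract a clean $1-e^{-2}$ and a clean $3$ on the dual term, without double-counting, is the delicate part; everything else is a routine repetition of the UFL analysis with $\gamma = 2$ and the dual substitution $R_{(j,A)} \le v^*_{j,A}$.
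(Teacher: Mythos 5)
Your skeleton is the paper's: the facility term is $2F_A$ because every facility opens with probability equal to its scaled fractional opening; per client you combine a direct-connection event of probability at least $1-e^{-2}$ with conditional expected distance at most $C_{(j,A)}$, and a deterministic cluster backup at distance at most $3R_{(j,A)}\le 3v^*_{j,A}$ obtained from the greedy ordering and the triangle inequality; summing over $j\in A$ gives the lemma. So the approach is the right one, but the two points you yourself flag as delicate are precisely where your write-up does not go through as stated, and in both cases the paper's resolution is simpler than what you propose.

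First, the $1-e^{-2}$ is \emph{not} obtained by multiplying an $e^{-1}$ failure bound per stage. For a pair $(j,A)\in S$ the Stage-II serving mass is $2-\sum_i\overline{x}^{(I)}_{A,ij}$, which can be anywhere in $[0,1]$, so the ``analogous statement for Stage II'' (failure probability at most $e^{-1}$ there) is false in general, and $e^{-1}\cdot e^{-1}$ is not a legitimate decomposition. The correct argument treats both stages at once: $\sum_i\overline{x}_{A,ij}=2\sum_i x^*_{A,ij}=2$, openings are independent across stages and across clusters/unclustered facilities (within a cluster exactly one facility opens, which only helps), so Lemma~\ref{probability_lemma} applied to the entire serving set gives failure probability at most $e^{-2}$ in one shot. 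Second, no ``cheapest half'' filtering is needed, and introducing it would hurt you, since it reduces the mass that drives the $e^{-2}$: because $\overline{x}_{A,\cdot j}$ is proportional to $x^*_{A,\cdot j}$, the weighted average distance over the whole serving set equals $C_{(j,A)}$ exactly, so the analogue of Lemma~\ref{exp_distance_lemma} bounds the conditional expected distance by $C_{(j,A)}$ with no loss. A final nit: in Stage I the cluster capturing $F_{(j,A)}$ may belong to a pair $(j',A')$ with $A'\ne A$; the greedy order still yields $R_{(j',A')}\le R_{(j,A)}$, so your chain $c_{ij}\le R_{(j,A)}+2R_{(j',A')}\le 3R_{(j,A)}\le 3v^*_{j,A}$ is unaffected, but your notation $R_{(j',A)}$ obscures this.
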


\begin{proof}
Since the probability of opening a facility is
equal to its fractional opening in $(\overline{x},\overline{y})$, 
the expected facility-opening cost of $(\hat{x},\hat{y})$ 
equals facility-opening cost of $(\overline{x},\overline{y})$, which
is exactly twice the facility-opening cost of $(x^*,y^*)$.

Fix a client $j \in A$. 
The total (from both stages) fractional opening in $\overline{y}$ of facilities serving $j$ in 
$(\overline{x},\overline{y})$ is exactly 2, hence the probability that at least one of these 
facilities is open in $(\hat{x},\hat{y})$ is at least $1-e^{-2}$.
Observe that, on the condition that at least one such facility is open,
by an analogous to the one from Lemma~\ref{exp_distance_lemma}, 
the expected distance to the closest of the open facilities is at most $C_{(j,A)}$.

With probability at most $e^{-2}$, none of the facilities fractionally serving $j$
in $(\overline{x},\overline{y})$ is open. In such a case we need to find a different
facility to serve $j$. We will now prove that for each client $j \in A$ there exists a facility
$i$ which is open in $(\hat{x},\hat{y})$, such that $c_{ij} \leq 3 \cdot v_{j,A}$.

Assume $(j,A) \in S$ (for $(j,A) \notin S$ the argument is analogous). 
If $F_{(j,A)}$ is a cluster, then at least one $i \in F_{(j,A)}$ is open
and $c_{ij} \leq v_{j,A}$.
Suppose $F_{(j,A)}$ is not a cluster, then by the construction of clusters,
it intersects a cluster $F_{(j',A')}$ with $R_{(j',A')} \leq R_{(j,A)} \leq v_{j,A}$.
Let $i$ be the facility opened in cluster $F_{(j',A')}$ and let $i' \in F_{(j',A')} \cap F_{(j,A)}$.
Since $i'$ is in $F_{(j,A)}$, $c_{i'j} \leq R_{(j,A)}$. Since both $i$ and $i'$ are in $F_{(j',A')}$,
both $c_{ij'} \leq R_{(j',A')}$ and $c_{i'j'} \leq R_{(j',A')}$.
Hence, by triangle inequality, $c_{ij} \leq R_{(j,A)} + 2 \cdot R_{(j',A')} \leq 3 \cdot R_{(j,A)} \leq 3 \cdot v_{j,A}$.

Thus, the expected cost of the solution in scenario $A$ is:
\begin{eqnarray*} 
 E[COST(A)] & \leq & e^{-2} \cdot 3 \cdot \sum_{j \in A} v_{j,A} + 
 (1-e^{-2})  (\sum_{j\in A}\sum_{i\in \F} c_{ij}x^*_{A,ij}) + 
 2 \cdot (\sum_{i \in \F} (f_i^I y^*_{i} + f_i^A y^*_{A,i})) \\
 & \leq & e^{-2} \cdot 3 \cdot V_A + (1-e^{-2}) \cdot C_A + 2 \cdot F_A.
\end{eqnarray*}
\end{proof}

Define $F^* = \sum_{i \in \mathcal{F}} f_i^I y_i +
\sum_A p_A (\sum_i f_i^A y_{A,i})$
and $C^* = \sum_A p_A (\sum_{j \in A} \sum_i c_{ij} x_{A,ij})$.
Note that we have $F^* = \sum_A p_A F_A$, $C^* = \sum_A p_A C_A$, and
$F^*+C^* = \sum_A p_A V_A$.
Summing up the expected cost over scenarios we obtain the following
estimate on the general expected cost, where the expectation is both 
on the choice of the scenario and on the random choices of our algorithm.

\begin{corollary}
 $E[COST(\hat{x},\hat{y})] \leq 2.4061 \cdot F^* + 1.2707 \cdot C^*$.
\end{corollary}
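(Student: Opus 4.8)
The plan is to derive the corollary directly from Lemma~\ref{primal_dual_lemma} by taking a single expectation over the random choice of scenario. Lemma~\ref{primal_dual_lemma} gives a per-scenario bound on $E[COST(A)]$; multiplying that inequality by $p_A$ and summing over all scenarios $A$, linearity of expectation yields
\[
E[COST(\hat{x},\hat{y})] \;=\; \sum_A p_A\, E[COST(A)] \;\leq\; 3e^{-2}\sum_A p_A V_A \;+\; (1-e^{-2})\sum_A p_A C_A \;+\; 2\sum_A p_A F_A .
\]

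Next I would substitute the three identities recorded immediately before the corollary, namely $\sum_A p_A F_A = F^*$, $\sum_A p_A C_A = C^*$, and $\sum_A p_A V_A = F^*+C^*$. This collapses the right-hand side to
\[
3e^{-2}(F^*+C^*) + (1-e^{-2})C^* + 2F^* \;=\; (2+3e^{-2})\,F^* + (1+2e^{-2})\,C^* .
\]
Finally I would verify the numerical estimates $2+3e^{-2} = 2.40600\ldots \le 2.4061$ and $1+2e^{-2} = 1.27067\ldots \le 1.2707$, using $e^{-2}=0.135335\ldots$, which gives the stated bound.

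There is essentially no technical obstacle here: the corollary is just the ``average over scenarios'' of a lemma already proved. The one point worth flagging is that the identity $\sum_A p_A V_A = F^*+C^*$ is exactly strong LP duality (the primal optimum equals the dual optimum $\sum_A p_A \sum_{j\in A} v^*_{j,A}$), and it is what allows the ``cluster-hopping'' detour of cost $3v_{j,A}$ incurred with probability $e^{-2}$ to be charged against the LP objective rather than against $C^*$ alone; without it the coefficient of $C^*$ could not be kept this small. Improving the two constants further—toward the $2.2975$ bound mentioned in the introduction—would require replacing the fixed scaling factor $2$ by a free parameter and re-optimising, which is not needed for this corollary.
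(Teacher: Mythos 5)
Your proof is correct and follows exactly the paper's own argument: average the bound of Lemma~\ref{primal_dual_lemma} over scenarios with weights $p_A$, substitute $\sum_A p_A F_A = F^*$, $\sum_A p_A C_A = C^*$, and $\sum_A p_A V_A = F^*+C^*$ (strong duality), and check that $2+3e^{-2}\leq 2.4061$ and $1+2e^{-2}\leq 1.2707$. Nothing is missing.
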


\begin{proof}
\begin{eqnarray*}
 E[COST(\hat{x},\hat{y})]& = & \sum_A p_A E[COST(A)]\\
 & \leq & \sum_A p_A\left( e^{-2} \cdot 3 \cdot V_A + (1-e^{-2}) \cdot C_A + 2 \cdot F_A \right)\\
 & = & (1-e^{-2}) \cdot C^* + 2 \cdot F^* + 3e^{-2} (\sum_A p_A V_A)\\
 & = & (1-e^{-2}) \cdot C^* + 2 \cdot F^* + 3e^{-2} (F^*+C^*)\\
 & = & (2 + e^{-2} \cdot 3) F^* + (1 + e^{-2} \cdot 2) C^* \\
 & \leq & 2.4061 \cdot F^* + 1.2707 \cdot C^*.
\end{eqnarray*}
\end{proof}

\paragraph{Combining two algorithms.}
The above described algorithm can be combined with an algorithm from~\cite{DBLP:conf/soda/Srinivasan07}
to obtain a $2.2975$-approximation algorithm.
See Appendix~\ref{app:comb-alg} for details.


\subsection{Per-scenario bounds: primal analysis}
\label{section:per_scenario}

Consider again the $2$-stage facility location problem, and a corresponding
optimal fractional solution. We now describe a randomized rounding
scheme so that for each scenario $A$, its expected final (rounded)
cost is at most $2.4061$ times its fractional counterpart 
$Val_A = \sum_{i \in \F} (f_i^I y^*_i + f_i^A y^*_{A,i}) + \sum_{j\in A}\sum_{i\in \F} c_{ij}x^*_{A,ij}$,
improving on the $3.095 \cdot Val_A$
bound of \cite{DBLP:conf/soda/Srinivasan07}.

Note that we cannot use dual bounds in this setting, as the dual budgets $V_A$
do not have to equal $Val_A$ in each scenario (see Appendix~\ref{app:per-scenario} for an example).
Instead, we scale the facility-opening values a little more and show that scaling by a factor of 2.4061 is
sufficient to bound the expected connection cost in each scenario by 2.4061 times the fractional connection
cost in this scenario. Details are given in Appendix~\ref{app:per-scenario}.

\section{Robust fault-tolerant UFL}
In recent work, Chechik and Peleg
have introduced a new variant of facility location 
problems~\cite{DBLP:conf/stacs/ChechikP10}.
They study a setting that can be described as follows. Once we choose the facilities to open, an adversary closes up to $k$ of them
(which models possible failures of facilities), and then clients are connected to the closest of the remaining open facilities. 
The goal is to minimize the facility-opening cost plus the worst-case (over the choice of facilities to close) connection cost.
Observe that integral solutions $(x,y)$ of the following linear program are exactly the feasible solutions
to the problem we study.

\begin{equation} \label{eqn:rf0}
 \mbox{minimize} ~
\sum_{i \in \mathcal{F}} f_i y_i +
\max_A \sum_{j} \sum_i c_{ij} x_{A,ij}~
\mbox{subject to} 
\end{equation}
\begin{eqnarray}
\sum_i x_{A,ij} & \geq & 1 ~~\forall A ~\forall j; 
\label{eqn:rf1} \\
x_{A,ij} & \leq & y_i  ~~\forall i ~\forall A ~\forall j ; 
\label{eqn:rf2} \\
x_{A,ij} & = & 0  ~~\forall A ~\forall i \in A ~\forall j ; 
\label{eqn:rf3} \\
x_{A,ij}, y_i, y_{A,i} & \geq & 0 ~~\forall i ~\forall A ~\forall j.
\nonumber
\end{eqnarray}

The ``scenarios'' $A$ in the above program are all the subsets of facilities of cardinality $k$,
and they encode the facilities closed by the adversary.
Note that the connection cost is calculated as a maximum over the scenarios.
The above program is of polynomial size only for fixed $k$ and we will only study settings with such small $k$.

In~\cite{DBLP:conf/stacs/ChechikP10} Chechik and Peleg gave a 6.5-approximation algorithm for $k=1$ and a ($7.5k + 1.5$)-approximation algorithm
for general $k$. We improve the latter to a $(k+5+4/k)$-approximation algorithm for the studied problem.
This we obtain by showing that scaling the facility opening variables by $(k+5+4/k)$ is sufficient
to provide enough fractional opening, which is then rounded by a dependent rounding method 
descrikbed in~\cite{ipco10} (Sections 3 and 4). More details are given in Appendix~\ref{app:rft}.

We also briefly discuss an oblivious version of the problem, where the adversary does not know
the random choices of the algorithm when deciding the facilities to close. In this setting 
we provide a randomized LP-rounding algorithm that delivers integral solutions
of expected cost at most $k + 1.5$ times the cost of the initial fractional solution. (see Appendix~\ref{app:rft})

We also show that these methods cannot be extended to obtain approximation ratios sublinear 
in $k$ by providing instances with integrality gap that are arbitrarily
close to $k + 1$. (see Appendix~\ref{app:rft})

\bibliographystyle{abbrv}
\bibliography{fls}

\appendix
\begin{center}
\Large{\textbf{Appendix}}
\end{center}

\section{Combining two algorithms for 2-stage stochastic facility location}
\label{app:comb-alg}

We have described an algorithm that returns solutions of expected cost at most $2.4061 \cdot F^* + 1.2707 \cdot C^*$.
Let us call this algorithm ALG1.

In~\cite{DBLP:conf/soda/Srinivasan07}, Srinivasan gave a different approximation algorithm for our 2-stage stochastic facility location problem.
This algorithm also splits the client-scenario pairs into two groups, namely those to be connected in the first stage,
and those that are left to be connected in the second stage. The decision is made by comparing fractional ``first stage'' connection 
of each pair with a certain threshold. Once the split is made, the obtained instances of the standard Uncapacitated Facility Location problem are
solved with the JMS algorithm~\cite{DBLP:journals/jacm/JainMMSV03}. 
The threshold is chosen randomly from a distribution parametrized 
by $\alpha$. For the choice of a parameter $\alpha = 0.2485$ the resulting algorithm is shown in~\cite{DBLP:conf/soda/Srinivasan07} to be a $2.369$-approximation algorithm.
It is easy to show that by setting $\alpha = 0.37$ in the algorithm of~\cite{DBLP:conf/soda/Srinivasan07}, we obtain
an algorithm that returns solutions of expected cost of at most $2.24152 F^* + 2.8254 C^*$.
We will call this algorithm ALG2.

Consider the algorithm ALG3, which tosses a coin that comes heads with probability $p=0.3396$.
If the coin comes heads, then ALG1 is executed; if it comes tails ALG2 is used.
The expected cost of the solution produced by ALG3 can be estimated as: $ F^* (p \cdot 2.4061 + (1-p)\cdot 2.24152) + 
C^*(p \cdot 1.2707 + (1-p)\cdot 2.8254) \leq 2.2975 (C^*+F^*)$.
Therefore, ALG3 is a $2.2975$-approximation algorithm for the 2-stage stochastic facility location problem.

\section{2-stage stochastic facility location with per-scenario bouds}
\label{app:per-scenario}

Let us first note that it is not possible to directly use the analysis from
the previous setting in the per-scenario model. This is because the dual costs $V_A$
do not need to be equal $Val_A = F_A + C_A$ in each scenario $A$. It is possible, for instance,
that the fractional opening of a facility in the first stage is entirely paid 
form the dual budget of a single scenario, despite the fact that clients not active
in this scenario benefit from the facility being open. This can be observed, e.g., in the following
simple example. 
Consider two clients $c^1$ and $c^2$, and two facilities $f^1$ and $f^2$.
All client facility distances are 1, except $c_{1,2}=dist(c^1,f^2)=3$.
Scenarios are: $A^1 = \{c^1\}$ and $A^2=\{c^2\}$, and they occur with probability 1/2 each.
The facility-opening costs are: $f_1^I=2$, $f_2^I=\epsilon$, $f_1^A=f_2^A=4$ for both scenarios $A$. 
It is easy to see that the only optimal fractional solution is integral
and it opens facility $f^1$ in the first stage, and opens no more facilities
in the second stage. Therefore, $Val(A^1)=Val(A^2)=3$.
However, in the dual problem, client $c^2$ has an advantage over $c^1$ in the access 
to the cheaper facility $f^2$, and therefore in no optimal dual solution client $c_2$ will pay
more then $\epsilon$ for the opening of facility $f^1$. In consequence, most of the cost of opening $f^1$
is paid by the dual budget of scenario $A^1$. Therefore, the dual budget $V_{A^1}$ 
is strictly greater then the primal bound $Val_{A^1}$ which we use as an estimate
of the cost of the optimal solution in scenario $A^1$.

Bearing the above example in mind, we construct an LP-rounding algorithm
that does not rely on the dual bound on the length of the created connections.
we use a primal bound, which is obtained by scaling the opening variables a little more
and using just a subset of fractionally connected facilities for each client
in the process of creating clusters. Such a simple filtering technique,
whose origins can be found in the work of Lin and Vitter~\cite{DBLP:conf/stoc/LinV92}, provides slightly weaker
but entirely primal, per-scenario bounds.

\paragraph{Algorithm.}

As before, we describe a randomized LP-rounding algorithm that transforms the fractional 
solution $(x^*,y^*)$ into an integral solution $(\hat{x},\hat{y})$ with bounded expected cost.
The expectation is over the random choices of the algorithm, but not over the
random choice of the scenario. 

We start by scaling the fractional solution $(x^*,y^*)$ by a factor of $\gamma > 2$.
As a result, we obtain a fractional solution $(\overline{x},\overline{y})$
with $\overline{x}_{A,ij} = \gamma \cdot x^*_{A,ij} $, $\overline{y}_i =  \gamma \cdot y^*_i $,
and $\overline{y}_{A,i} =  \gamma \cdot y^*_{A,i} $. Note that the scaled fractional solution
$(\overline{x},\overline{y})$ may have facilities with fractional opening of more than $1$.
Again, for simplicity of the analysis, we do not round these facility-opening values to 1, but rather split such facilities.
More precisely, we split each facility $i$ with fractional opening $\overline{y}_i > \overline{x}_{A,ij} > 0$ (or $\overline{y}_{A,i} > \overline{x}_{A,ij} > 0$) for some $(A, j)$ into $i'$ and $i''$, such that 
 $\overline{y}_{i'} = \overline{x}_{A,ij}$ and $\overline{y}_{i''} = \overline{y}_i - \overline{x}_{A,ij}$.
We also split facilities whose fractional opening exceeds one.

As before, define $\overline{x}^{(I)}_{A,ij} = \min \{\overline{x}_{A,ij}, \overline{y}_i \}$, and 
$\overline{x}^{(II)}_{A,ij} = \overline{x}_{A,ij} - \overline{x}^{(I)}_{A,ij}$.
Define \\ 
\[
F_{(j,A)}^I = \left\{ 
\begin{array}{l l}
    argmin_{F' \subseteq \F : \sum_{i \in F'} \overline{x}^{(I)}_{A,ij} \geq 1 } max_{i \in F'} c_{ij} &  
    \mbox{if $\sum_{i \in \F} \overline{x}^{(I)}_{A,ij} \geq 1$}\\
  \emptyset &  \mbox{if $\sum_{i \in \F} \overline{x}^{(I)}_{A,ij} < 1$}\\
\end{array} \right.
\]

\[
F_{(j,A)}^{II} = \left\{ 
\begin{array}{l l}
    argmin_{F' \subseteq \F : \sum_{i \in F'} \overline{x}^{(II)}_{A,ij} \geq 1 } max_{i \in F'} c_{ij} &  
    \mbox{if $\sum_{i \in \F} \overline{x}^{(II)}_{A,ij} \geq 1$}\\
  \emptyset &  \mbox{if $\sum_{i \in \F} \overline{x}^{(II)}_{A,ij} < 1$}\\
\end{array} \right.
\]

Note that  these sets can easily be computed by considering facilities in an order of non-decreasing distances $c_{ij}$
to the considered client $j$.
Since we can split facilities, w.l.o.g., for all $j \in \C$ we assume that if $F_{(j,A)}^{I}$ is nonempty then
$\sum_{i \in F_{(j,A)}^{I}} \overline{x}^{(I)}_{A,ij} = 1$, and if $F_{(j,A)}^{II}$ is not empty then
$\sum_{i \in F_{(j,A)}^{II}} \overline{x}^{(II)}_{A,ij} = 1$.
Define $d_{(j,A)}^I = max_{i \in F_{(j,A)}^I} c_{ij}$ and $d_{(j,A)}^{II} = max_{i \in F_{(j,A)}^{II}} c_{ij}$.
Let $d_{(j,A)} = min\{d_{(j,A)}^I, d_{(j,A)}^{II}\}$.

For a client-scenario pair $(j,A)$, if we have $d_{(j,A)} = d_{(j,A)}^I$, then we call such a pair \emph{first-stage clustered},
and put its \emph{cluster candidate} $F_{(j,A)} = F_{(j,A)}^I$. Otherwise, if $d_{(j,A)} = d_{(j,A)}^{II} < d_{(j,A)}^{I}$, 
we say that $(j,A)$ is \emph{second-stage clustered} and put its \emph{cluster candidate} $F_{(j,A)} = F_{(j,A)}^{II}$

Recall that we use $C_{(j,A)}=\sum_i c_{ij} x^*_{A,ij}$ to denote the fractional connection cost of client $j$ in scenario $A$.
Let us now argue that distances to facilities in \emph{cluster candidates} are not too large. 

\begin{lemma} \label{lemma:d_j_A}
 $d_{(j,A)} \leq \frac{\gamma}{\gamma-2} \cdot C_{(j,A)}$ for all pairs $(j,A)$. 
\end{lemma}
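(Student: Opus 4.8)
The plan is to bound $d_{(j,A)}$ by exhibiting a cheap-enough candidate set for \emph{whichever} stage has enough fractional mass, and then using a simple averaging (Markov-type) argument against the fractional connection cost $C_{(j,A)}$. First I would recall that the total scaled fractional opening serving $j$ in scenario $A$ is $\sum_{i \in \F} \overline{x}_{A,ij} = \gamma \sum_{i \in \F} x^*_{A,ij} \geq \gamma$, and that $\overline{x}_{A,ij} = \overline{x}^{(I)}_{A,ij} + \overline{x}^{(II)}_{A,ij}$ by definition. Hence at least one of $\sum_i \overline{x}^{(I)}_{A,ij}$ and $\sum_i \overline{x}^{(II)}_{A,ij}$ is at least $\gamma/2 > 1$, so at least one of $F_{(j,A)}^I$, $F_{(j,A)}^{II}$ is nonempty; since $d_{(j,A)}$ is the minimum of the two corresponding radii (with the convention that an empty set contributes $+\infty$), it suffices to bound the radius of the stage that has mass $\geq \gamma/2$. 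Without loss of generality say $\sum_i \overline{x}^{(I)}_{A,ij} \geq \gamma/2$; the other case is symmetric.

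Next I would argue about the radius $d^I_{(j,A)} = \max_{i \in F^I_{(j,A)}} c_{ij}$. Order the facilities $i$ with $\overline{x}^{(I)}_{A,ij} > 0$ by non-decreasing $c_{ij}$ and let $\tau$ be the distance at which the cumulative $\overline{x}^{(I)}$-mass first reaches $1$; since $F^I_{(j,A)}$ is (by its $\arg\min$ definition and the splitting assumption) exactly a prefix of this order of total mass $1$, we have $d^I_{(j,A)} = \tau$. All facilities at distance $\geq \tau$ in this order carry total $\overline{x}^{(I)}$-mass at least $\gamma/2 - 1$ (what remains after the first unit), hence total $\overline{x}$-mass at least $\gamma/2-1$, hence $x^*$-mass at least $(\gamma/2-1)/\gamma = \tfrac12 - \tfrac1\gamma = \tfrac{\gamma-2}{2\gamma}$, each contributing $c_{ij} \geq \tau$ to $C_{(j,A)} = \sum_i c_{ij} x^*_{A,ij}$. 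Therefore $C_{(j,A)} \geq \tfrac{\gamma-2}{2\gamma}\cdot \tau = \tfrac{\gamma-2}{2\gamma}\, d^I_{(j,A)}$, which rearranges to $d^I_{(j,A)} \leq \tfrac{2\gamma}{\gamma-2} C_{(j,A)}$.

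This gives a bound of $\tfrac{2\gamma}{\gamma-2}$, a factor of $2$ worse than the claimed $\tfrac{\gamma}{\gamma-2}$, so the remaining work is to recover that factor. The fix is to not insist that $F^I_{(j,A)}$ carry a full unit of $\overline{x}^{(I)}$-mass but instead observe that we only need the facilities \emph{strictly beyond} radius $\tau$ to already account for enough $x^*$-mass: more carefully, take $\tau$ to be the smallest radius so that the $\overline{x}^{(I)}$-mass within distance $\tau$ is at least $1$; then the mass at distances $> \tau$ is at least $\tfrac\gamma2 - 1$ but one can sharpen by noting that the "within distance $\tau$" ball, having $\overline{x}$-mass $\geq 1$, already costs at most $\tau$ per unit, and rebalancing the two halves gives $C_{(j,A)} \geq \tfrac12\bigl(\tfrac{\gamma-2}{\gamma}\bigr)\tau \cdot 2$ — i.e.\ counting that on average the $x^*$-mass beyond $\tau$ is "paid" at rate $\geq\tau$ while being of total size $\tfrac{\gamma-2}{\gamma}\cdot\tfrac12$ relative to... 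The cleanest route, and the one I expect the authors take, is: the facilities serving $j$ at distance $> d_{(j,A)}$ must have total $x^*$-mass at least $1 - \tfrac2\gamma = \tfrac{\gamma-2}{\gamma}$ (since within distance $d_{(j,A)}$ the $\overline{x}$-mass is at most $2$, as it is covered by $F^I$ plus $F^{II}$ prefixes each of mass $\le 1$... and in the case analysis only one prefix lies within $d_{(j,A)}$ for the minimizing stage — here is the subtlety). Hence $C_{(j,A)} \geq \tfrac{\gamma-2}{\gamma}\, d_{(j,A)}$ directly, giving the lemma. The main obstacle, and the step needing the most care, is precisely this last mass-accounting: correctly arguing that the fractional $x^*$-mass serving $j$ at distances exceeding $d_{(j,A)}$ is at least $\tfrac{\gamma-2}{\gamma}$ — one must handle both the first-stage-clustered and second-stage-clustered cases and make sure the "min" in the definition of $d_{(j,A)}$ is used to conclude that beyond that radius a full $(\gamma-2)$ units of $\overline{x}$-mass (across both stages combined) remain unaccounted, rather than only $(\gamma/2 - 1)$.
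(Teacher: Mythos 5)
Your proposal, as actually carried out, only establishes the weaker bound $d_{(j,A)} \leq \frac{2\gamma}{\gamma-2}\, C_{(j,A)}$: the one complete argument you give charges $C_{(j,A)}$ against the leftover mass of a single ``heavy'' stage (at least $\gamma/2 - 1$ scaled units beyond the radius), and you note the factor-$2$ loss yourself. The step that would recover the claimed constant --- showing that the $x^*$-mass serving $(j,A)$ at distance at least $d_{(j,A)}$ is at least $\frac{\gamma-2}{\gamma}$ --- is precisely the step you leave unfinished: the ``rebalancing'' passage trails off, and your final sketch ends by calling this mass-accounting ``the main obstacle'' without resolving the subtlety you raise. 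That is a genuine gap at the decisive point, even though the idea you gesture at in the last paragraph is the right one. Your opening reduction to the stage with mass at least $\gamma/2$ is also the wrong pivot: the argument must be anchored to the stage achieving the minimum in $d_{(j,A)}$ and to \emph{both} candidate sets simultaneously, not to the heavier stage.

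For comparison, the paper completes exactly your ``cleanest route'' as follows. Assume without loss of generality that $(j,A)$ is first-stage clustered, so $d_{(j,A)} = d^I_{(j,A)} \leq d^{II}_{(j,A)}$, and recall that $F^I_{(j,A)}$ and $F^{II}_{(j,A)}$ are closest-prefix sets of scaled mass exactly $1$ when nonempty. Any facility serving $(j,A)$ that lies outside $F^I_{(j,A)} \cup F^{II}_{(j,A)}$ is at distance at least $d_{(j,A)}$: if it carries first-stage mass it was excluded from the prefix $F^I_{(j,A)}$, hence is at distance at least $d^I_{(j,A)}$, and if it carries second-stage mass it is at distance at least $d^{II}_{(j,A)} \geq d^I_{(j,A)}$. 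If $F^{II}_{(j,A)}$ is nonempty, the scaled mass outside the union is $\gamma - 2$; if it is empty, the total second-stage mass is less than $1$, so the first-stage mass exceeds $\gamma - 1$ and the mass outside $F^I_{(j,A)}$ alone exceeds $\gamma - 2$. In either case $C_{(j,A)} \geq \frac{\gamma-2}{\gamma}\, d_{(j,A)}$. Note that the particular subtlety you flag (the non-minimizing prefix possibly reaching beyond $d_{(j,A)}$) is harmless, since it only reduces the nearby mass; what genuinely needs a separate sentence is the empty-prefix case above, which your sketch glosses over.
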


\begin{proof}
Fix a client-scenario pair $(j,A)$. Assume $F_{(j,A)} = F_{(j,A)}^I$ (the other case is symmetric).
Recall that in this case we have $d_{(j,A)} = d_{(j,A)}^I \leq d_{(j,A)}^{II}$.
Consider the following two subcases.\\
\textbf{Case 1.} $\sum_{i \in F_{(j,A)}^{II}} \overline{x}^{(II)}_{A,ij} = 1$.\\
Observe that we have $c_{ij} \geq d_{(j,A)}$ for all $i \in F'=\F\setminus (F_{(j,A)}^I \cup F_{(j,A)}^{II})$.
Note also that $\sum_{i \in F'} \overline{x}_{A,ij} = \gamma - 2$ and
$\sum_{i \in F'} x^*_{A,ij} = \frac{\gamma - 2}{\gamma}$. 
Hence, $C_{(j,A)}=\sum_{i \in \F} x^*_{A,ij} c_{ij} \geq \sum_{i \in F'} x^*_{A,ij} c_{ij} \geq \frac{\gamma - 2}{\gamma} \cdot d_{(j,A)}$.\\
\textbf{Case 2.} $\sum_{i \in F_{(j,A)}^{II}} \overline{x}^{(II)}_{A,ij} = 0$, which implies that 
$\sum_{i \in \F} \overline{x}^{(II)}_{A,ij} < 1$. Observe that now we have $\sum_{i \in \F} \overline{x}^{(I)}_{A,ij}  > \gamma - 1$,
and therefore $\sum_{i \in \F \setminus F_{(j,A)}^I} \overline{x}^{(I)}_{A,ij} > \gamma-2$.
Recall that $c_{ij} \geq d_{(j,A)}$ for all $i \in (\F\setminus F_{(j,A)}^I) $, hence
$C_{(j,A)} = \sum_{i \in \F} x^*_{A,ij} c_{ij} \geq \sum_{i \in (\F\setminus F_{(j,A)}^I)} x^*_{A,ij} c_{ij} > 
\frac{\gamma - 2}{\gamma} \cdot d_{(j,A)}$. 
\end{proof}
 
Like in Section~\ref{section:general_exp}, the algorithm opens facilities randomly in each of the stages
with the probability of opening facility $i$ equal to $\overline{y}_i$
in Stage I, and $\overline{y}_{A,i}$ in Stage II of scenario $A$.
Some facilities are grouped in disjoint \emph{clusters} in order to correlate 
the opening of facilities from a single cluster.
The clusters are formed in each stage by the following procedure.
Let all facilities be initially unclustered.
In Stage I, consider all first-stage clustered client-scenario pairs, 
i.e., pairs $(j,A)$ such that $d_{(j,A)} = d_{(j,A)}^I$.
(in Stage II of scenario $A$, consider all second-stage clustered client-scenario pairs)
in the order of non-decreasing values $d_{(j,A)}$.
If the set of facilities $F_{(j,A)}$ contains no facility
from the previously formed clusters, 
then form a new cluster containing facilities from $F_{(j,A)}$,
otherwise do nothing.
In each stage, open exactly one facility in each cluster.
Recall that the total fractional opening of facilities in each cluster
equals 1. Within each cluster choose the facility randomly with the probability
of opening facility $i$ equal to the fractional opening $\overline{y}_i$ 
in Stage I, or $\overline{y}_{A,i}$ in Stage II of scenario $A$.
For each unclustered facility $i$ open it independently with 
probability $\overline{y}_i$ in Stage I, and with probability $\overline{y}_{A,i}$ in Stage II of scenario $A$.

Finally, at the end of Stage II of scenario $A$, connect each client $i \in A$ to the 
closest open facility.

\paragraph{Analysis.}
The expected facility-opening cost is obviously $\gamma$ times the fractional opening cost.
More precisely, the expected facility-opening cost in scenario $A$ equals $\gamma \cdot F^*_A =
\gamma \cdot \sum_{i \in \mathcal{F}} f_i^I y_i +\sum_i f_i^A y_{A,i}$
it remains to bound the expected connection cost in scenario $A$ in terms of 
$C^*_A =\sum_{j \in A} \sum_i c_{ij} x_{A,ij}$. 

\begin{lemma}
The expected connection cost in scenario A is at most $(1+\frac{2\gamma+2}{\gamma-2} e^{-\gamma}) \cdot C_{(j,A)}$.
\end{lemma}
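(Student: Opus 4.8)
The plan is to reuse the per-client template of Lemma~\ref{primal_dual_lemma} and Theorem~\ref{thm:main-theorem}, but to replace the dual estimate $R_{(j,A)}\le v_{j,A}$ by the purely primal estimate of Lemma~\ref{lemma:d_j_A}. The expected facility-opening cost in scenario $A$ is exactly $\gamma$ times the fractional one, as already noted, so it suffices to fix a scenario $A$ and a client $j\in A$, prove that the expected connection cost of $j$ is at most $\left(1+\frac{2\gamma+2}{\gamma-2}e^{-\gamma}\right)C_{(j,A)}$, and then sum over $j\in A$ (the displayed bound in the statement is per client and should be read with a sum $\sum_{j\in A}C_{(j,A)}=C^*_A$ on the right).

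First I would treat the ``good'' event. Let $B_j=\{i\in\F:\overline{x}_{A,ij}>0\}$ be the set of facilities fractionally serving $j$. As in Section~\ref{section:general_exp}, after the splitting step each $i\in B_j$ is opened (in its stage) exactly to the amount by which it serves $j$, so the total fractional opening over $B_j$ is $\sum_i\overline{x}_{A,ij}\ge\gamma$, and the $\overline{y}$-weighted average distance from $j$ over $B_j$ coincides with $\sum_i c_{ij}x^*_{A,ij}=C_{(j,A)}$. Applying Lemma~\ref{probability_lemma} to the clusters (handled as in the proof idea of Lemma~\ref{exp_distance_lemma}) together with the remaining unclustered facilities, the probability $p_1$ that at least one facility of $B_j$ is open is at least $1-e^{-\gamma}$; and, conditioned on this, the analogue of Lemma~\ref{exp_distance_lemma} gives that the expected distance from $j$ to the closest open facility of $B_j$ is at most $C_{(j,A)}$.

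Next I would treat the ``bad'' event, which has probability $p_2=1-p_1\le e^{-\gamma}$: no facility of $B_j$ is open. Here $j$ is routed through its cluster candidate $F_{(j,A)}\subseteq B_j$ (every pair $(j,A)$ is either first- or second-stage clustered, so $F_{(j,A)}$ is well defined). Since no facility of $B_j$ is open, $F_{(j,A)}$ cannot itself be a cluster, because every cluster has exactly one open facility; hence, by the clustering procedure, $F_{(j,A)}$ intersects a cluster $F_{(j',A')}$ formed earlier \emph{in the same stage}, so $d_{(j',A')}\le d_{(j,A)}$, and the facility $\hat\imath$ opened in $F_{(j',A')}$ is open in scenario $A$ (Stage~I clusters are opened globally; Stage~II clusters are built within Stage~II of scenario $A$, so $A'=A$ there). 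Choosing $i^\ast\in F_{(j,A)}\cap F_{(j',A')}$ and using $c_{i^\ast j}\le d_{(j,A)}$, $c_{i^\ast j'}\le d_{(j',A')}$ and $c_{\hat\imath j'}\le d_{(j',A')}$, the triangle inequality yields $c_{\hat\imath j}\le d_{(j,A)}+2d_{(j',A')}\le 3d_{(j,A)}$, and Lemma~\ref{lemma:d_j_A} gives $d_{(j,A)}\le\frac{\gamma}{\gamma-2}C_{(j,A)}$. Combining the two cases, $E[\mathrm{conn}(j)]\le p_1 C_{(j,A)}+p_2\cdot 3d_{(j,A)} = C_{(j,A)}+p_2\big(3d_{(j,A)}-C_{(j,A)}\big)\le C_{(j,A)}\big(1+e^{-\gamma}(\tfrac{3\gamma}{\gamma-2}-1)\big)=\big(1+\tfrac{2\gamma+2}{\gamma-2}e^{-\gamma}\big)C_{(j,A)}$, where the last inequality uses $3d_{(j,A)}-C_{(j,A)}\ge 0$ and $p_2\le e^{-\gamma}$; summing over $j\in A$ finishes the argument.

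The main obstacle I anticipate is not the arithmetic but the bookkeeping in the bad-event case: one must be careful that the earlier cluster $F_{(j',A')}$ that $F_{(j,A)}$ merges into lies in the \emph{same} stage (and, for Stage~II, the same scenario $A$), so that its opened facility is genuinely available to serve $j$ in scenario $A$, and that processing pairs in non-decreasing $d_{(j,A)}$ order really does force $d_{(j',A')}\le d_{(j,A)}$. A secondary point that must be checked, exactly as in Section~\ref{section:general_exp}, is that the analogue of Lemma~\ref{exp_distance_lemma} (giving the conditional bound $C_{(j,A)}$ in the good event) remains valid once the facilities inside a cluster are opened in a correlated fashion rather than independently.
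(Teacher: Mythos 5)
Your proposal is correct and follows essentially the same route as the paper: with probability at least $1-e^{-\gamma}$ some facility fractionally serving $(j,A)$ is open and the conditional expected distance is at most $C_{(j,A)}$ by the analogue of Lemma~\ref{exp_distance_lemma}, while otherwise the greedy clustering guarantees an open facility within $3d_{(j,A)}\le\frac{3\gamma}{\gamma-2}C_{(j,A)}$ by Lemma~\ref{lemma:d_j_A} (you merely spell out the cluster-intersection/triangle-inequality step that the paper imports from Lemma~\ref{primal_dual_lemma}). Your side claim that $3d_{(j,A)}-C_{(j,A)}\ge 0$ is not actually justified and can fail, but this is harmless: if it is negative the expected cost is trivially at most $C_{(j,A)}$, which is below the claimed bound, so the conclusion stands either way.
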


\begin{proof}
Consider a single client-scenario pair $(j,A)$.
Observe that the facilities fractionally connected to $j$ in scenario $A$
have the total fractional opening of $\gamma$ in the scaled facility-opening vector $\overline{y}$.
Since there is no positive correlation (only negative correlation in the disjoint clusters formed 
by the algorithm), with probability at least $1-e^{-\gamma}$ at least one such facility will be opened,
moreover, by Lemma~\ref{exp_distance_lemma} the expected distance to the closest of the open facilities from this set
will be at most the fractional connection cost $C_{(j,A)}$.

Just like in the proof of Lemma~\ref{primal_dual_lemma}, from the greedy construction of the clusters
in each phase of the algorithm, with probability 1, there exists facility $i$ opened 
by the algorithm such that $c_{ij} \leq 3 \cdot d_{(j,A)}$.
We connect client $j$ to facility $i$ if no facility from facilities fractionally serving $(j,A)$ was opened.
We obtain that the expected connection cost of client $j$ is at most $(1-e^{-\gamma}) \cdot C_{(j,A)} + e^{-\gamma} \cdot 3 d_{(j,A)}$.
By Lemma~\ref{lemma:d_j_A}, this can by bounded by  
$(1-e^{-\gamma}) \cdot C*_{j,A} + e^{-\gamma} \cdot 3 \cdot \frac{\gamma}{\gamma-2} \cdot C_{(j,A)} =
(1+\frac{2\gamma+2}{\gamma-2} e^{-\gamma}) \cdot C_{(j,A)}$
\end{proof}

To equalize the opening and connection cost approximation ratios we solve $(1+\frac{2\gamma+2}{\gamma-2} e^{-\gamma}) = \gamma$ and obtain the following.

\begin{theorem}
The described algorithm with $\gamma = 2.4957$ delivers solutions such that the expected cost in each scenario $A$
is at most 2.4957 times the fractional cost in scenario $A$. 
\end{theorem}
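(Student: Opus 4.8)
The plan is to assemble the theorem directly from the two facts just established for the per-scenario algorithm, and then to choose $\gamma$ so that the facility and connection approximation factors coincide. First I would recall that each facility $i$ is opened with probability exactly equal to its scaled fractional value ($\overline{y}_i$ in Stage~I, $\overline{y}_{A,i}$ in Stage~II), both for clustered facilities — the negative correlation within a cluster does not alter marginals — and for unclustered ones. Hence, by linearity of expectation, the expected facility-opening cost in scenario $A$ is exactly $\gamma$ times the fractional facility-opening cost $F^*_A = \sum_i f_i^I y^*_i + \sum_i f_i^A y^*_{A,i}$.

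Next I would invoke the connection-cost lemma proved immediately above: for each client $j \in A$ the expected cost of connecting $j$ is at most $\left(1 + \frac{2\gamma+2}{\gamma-2} e^{-\gamma}\right) C_{(j,A)}$, where $C_{(j,A)} = \sum_i c_{ij} x^*_{A,ij}$. Summing over $j \in A$ shows that the expected connection cost in scenario $A$ is at most $\left(1 + \frac{2\gamma+2}{\gamma-2} e^{-\gamma}\right) C^*_A$. Adding the two bounds gives $E[COST(A)] \le \gamma \cdot F^*_A + \left(1 + \frac{2\gamma+2}{\gamma-2} e^{-\gamma}\right) C^*_A$.

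Now I would pick $\gamma$ to equalize the two coefficients, i.e.\ solve $1 + \frac{2\gamma+2}{\gamma-2} e^{-\gamma} = \gamma$. With that choice, $E[COST(A)] \le \gamma\,(F^*_A + C^*_A) = \gamma \cdot Val_A$, which is exactly the claimed per-scenario guarantee. It then remains only to check that this transcendental equation has a root, that the root is $\approx 2.4957$, and — crucially — that it exceeds $2$, since the definitions of $F_{(j,A)}$ and Lemma~\ref{lemma:d_j_A} all require $\gamma > 2$. As $\gamma \to 2^+$ the left side blows up to $+\infty$ while the right side stays near $2$, so $\mathrm{LHS} - \mathrm{RHS} > 0$; for large $\gamma$ the exponential term vanishes and $\mathrm{LHS} - \mathrm{RHS} \to 1 - \gamma < 0$; by continuity on $(2,\infty)$ a root exists, and a direct numerical evaluation places it near $2.4957 > 2$, so all earlier lemmas apply.

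The ``hard part'' here is genuinely minor: it is just the one-variable verification that the fixed-point equation $1 + \frac{2\gamma+2}{\gamma-2} e^{-\gamma} = \gamma$ has its relevant root in $(2,\infty)$ and that this root is $\approx 2.4957$, together with a quick monotonicity check near that root so that $\gamma = 2.4957$ is unambiguously the equalizing choice. Everything substantive — the marginal-preservation argument for the opening cost, the $(1-e^{-\gamma})$-probability bound for hitting a fractionally serving facility, and the triangle-inequality detour of length $3\,d_{(j,A)}$ through a nearby cluster — is already supplied by the preceding lemmas, so no further probabilistic or combinatorial work is needed.
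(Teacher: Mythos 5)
Your proposal is correct and is essentially the paper's own proof: the paper likewise combines the exact $\gamma\cdot F^*_A$ facility bound with the preceding per-client connection lemma and then ``equalizes'' by solving $1+\frac{2\gamma+2}{\gamma-2}e^{-\gamma}=\gamma$, and your additional checks (marginals preserved under clustered rounding, summing over $j\in A$, existence of a root in $(2,\infty)$) merely add rigor. One caveat you share with the paper: the root of that equation is actually $\approx 2.43$ rather than $2.4957$; since the connection factor $1+\frac{2\gamma+2}{\gamma-2}e^{-\gamma}$ is decreasing on $(2,\infty)$, taking $\gamma=2.4957$ still gives a connection factor below $2.4957$, so the stated theorem remains valid (in fact with slack), but the ``equalizing'' value is not the quoted constant.
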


\section{Robust fault-tolerant UFL}
\label{app:rft}

\subsection{The $(k+5+4/k)$-approximation rounding routine}
\label{sec:rftfl-general}
Like in the algorithms in the previous sections we first scale up the fractional facility-opening costs,
we then cluster certain facilities to correlate their opening,
and then use a randomized rounding routine to decide the subset of facilities to open.
Once we open facilities and the adversary chooses which $k$ of them to close,
clients get connected to the closest of the remaining open facilities.

Let $(x^*,y^*)$ be an optimal solution to the above LP relaxation 
of the problem.
We first scale up the opening of facilities by $\gamma=k+5+4/k$,
i.e., we set $\overline{y}_i = \min\{1,\gamma \cdot y^*_i \}$.
We also set $\overline{x}_{A,ij} = \min\{1,\gamma \cdot x^*_{A,ij} \}$
 
Consider a single client-scenario pair $(j,A)$.
Consider facilities $i$ fractionally serving this pair in solution $(x^*,y^*)$
in an order $i^1, i^2, \ldots $ of non-decreasing distance to $c_{ij}$.
Let $i'$ be the first facility in this order such that 
$x^*_{A,i^1j} + x^*_{A,i^2j} + x^*_{A,i'j} \geq \frac{k+1}{\gamma} = \frac{k}{k+4}$.
Recall that $C_{(j,A)} = \sum_i c_{ij} x^*_{A,ij}$
denotes the fractional connection cost of client $j$ in scenario $A$.
By an argument analogous to the one in Lemma~\ref{exp_distance_lemma}, we obtain that 
$c_{i'j} \leq \frac{(k+4)(k+1)}{3k} \cdot C_{(j,A)}$.
We now distinguish two cases.\\
\textbf{Case 1.} There exists $i$ among $i^1, i^2, \ldots, i'$ such that $\overline{y}_i = 1$.
Then facility $i$ will be deterministically opened by the algorithm.
Note that since $x^*_{A,ij} >0$, we have $i \notin A$ (i.e., facility $i$ 
is not closed by the adversary
in scenario $A$); hence, we can connect $j$ to $i$ in scenario $A$ in our constructed integral solution. 
It remains to observe that $c_{ij} \leq \frac{(k+4)(k+1)}{3k} \cdot C_{(j,A)} \leq (k+5+4/k) \cdot C_{(j,A)}$ 
is a distance that we can accept.

\smallskip \noindent
\textbf{Case 2.} There is no $i$ among $i^1, i^2, \ldots, i'$ such that $\overline{y}_i = 1$.
Then we have $\overline{x}_{A,i^1j} + \overline{x}_{A,i^2j} + \cdot + \overline{x}_{A,i'j} \geq k+1$,
which is the fractional connection to at least $k+1$ facilities, each of them
within the distance of $\frac{(k+4)(k+1)}{3k} \cdot C_{(j,A)}$. 
With a randomized rounding technique described below, they will be turned into
$k+1$ facilities opened within the distance of $3 \cdot \frac{(k+4)(k+1)}{3k} \cdot C_{(j,A)} = 
(k+5+4/k) \cdot C_{(j,A)}$. Since at most $k$ of these facilities will be closed by the adversary,
there remains an open facility for client $j$ in scenario $A$ at distance at most $(k+5+4/k) \cdot C_{(j,A)}$.

It remains to argue that we can turn $k+1$ fractional connections to facilities at distance at most $d$
into $k+1$ integral connections to facilities at distance at most $3d$. This can be seen
as a situation typical for LP-rounding algorithms for the standard fault-tolerant facility location problem.
Indeed, exactly this property is associated with the rounding scheme in~\cite{ipco10} (Sections 3 and 4). 
It is obtained by carefully constructing
a laminar family of subsets of facilities and performing a dependent rounding procedure guided
by the subsets. It can also be thought of as an application of the pipage-rounding technique~\cite{AS}. 

\subsection{Better bound in the oblivious setting}
\label{sec:rftfl-oblivious}
Let us now consider the \emph{oblivious} setting where the $k$ facilities to 
close/fail are chosen
without the knowledge of our opening of facilities. 
In this setting we give a bound on the expected connection cost, where the expectation is 
over the random choices of the algorithm. More precisely, we will argue that the expected
connection cost of client $j$ in scenario $A$ is bounded with respect to the fractional connection
cost of $j$ in scenario $A$. 

The difference with the previous setting is that now we can use the argument that after scaling the facility-opening variables
by a constant $\gamma$, for a client $j$ in scenario $A$, with probability at least $1-e^{-\gamma}$,
at least one facility from those fractionally serving $(j,A)$ will be opened. Moreover, we can bound
the expected distance to such facility by the fractional connection cost of $(j,A)$.
This allows us to use those facilities that get opened with certainty 
(as described in Section~\ref{sec:rftfl-general}) 
only with a certain small probability. 
In such a situation, it is beneficial to scale the 
facility-opening variables by a little less.

The algorithm is like in Section~\ref{sec:rftfl-general}),
only the scaling parameter $\gamma$ is smaller (say $\gamma=1.5+k$),
and the analysis is different. We argue that 
for every client $j$ in each single 
scenario (choice of the $k$ facilities to close) $A$ the expected
connection cost is bounded. As before we distinguish two cases.\\
{\bf Case 1.} (There exists $i$ such that $\overline{x}_{A,ij} = 1$)
If there is such facility at distance at most $(1.5+k) C_{j,A}$ we just connect to it,
otherwise, the average connection cost in $\overline{x}$ is only smaller then the average connection cost in $x^*$,
and we may use a version of the Lemma~\ref{exp_distance_lemma} to argue
that the expected connection cost to the closest of the facilities 
randomly opened by the algorithm is at most $C_{j,A}$.

\smallskip \noindent
{\bf Case 2.} (There is no such facility, 
and therefore there is no $i$ among $i^1, i^2, \ldots, i'$ such that $\overline{y}_i = 1$)
Then we have $\overline{x}_{A,i^1j} + \overline{x}_{A,i^2j} + \cdot + \overline{x}_{A,i'j} \geq k+1$,
which is the fractional connection to at least $k+1$ facilities, each of them
within the distance of $(3+2k) \cdot C_{(j,A)}$. Just like in Section~\ref{sec:rftfl-general}),
we argue that as a result of dependent rounding we obtain $k+1$ facilities 
deterministically opened within the distance $3(3+2k)\cdot C_{(j,A)}$.
And now we propose a suboptimal assignment procedure to bound the cost of the optimal one.
In the suboptimal assignment, client first looks at facilities fractionally serving
him. If one of them is opened then connect to the closest one, which would 
incur an expected cost of $C_{(j,A)}$, and if non is open, then take a facility
deterministically opened within distance $3(3+2k)\cdot C_{(j,A)}$.
Like for the other results in this paper, we then argue that the expected connection cost is 
at most $(1-e^{-(1.5+k)})\cdot C_{(j,A)} + e^{-(1.5+k)}\cdot3(3+2k)\cdot C_{(j,A)}$,
which is less then $(1.5 + k) \cdot C_{(j,A)}$ for $k\geq2$.

\subsection{Integrality gap example}
Let us now show that the program (\ref{eqn:rf0})-(\ref{eqn:rf3})
has integrality gap at least $k+1-\epsilon$.
Consider the following instance. There is a single client
and $n$ identical facilities. All the facility-opening costs are 1,
and all the connection costs are 0.
The optimal fractional solution opens each facility 
to the extent of $\frac{1}{n-k}$, 
incurring cost $\frac{n}{n-k} \stackrel{n \to \infty}{\rightarrow} 1$.
Any integral solution, however, needs to open at least $k+1$ facilities
and therefore has cost at least $k+1$.
Therefore, for any $\alpha < k+1$ there exists an instance of the $k$-robust fault tolerant problem, for which the integrality gap of the 
program (\ref{eqn:rf0})-(\ref{eqn:rf3}) at least $\alpha$.

\end{document}